\newtheorem{cor}{Corollary}[section]
\newtheorem{prop}{Proposition}[section]
\theoremstyle{plain}
\newtheorem{theorem}{Theorem}
\theoremstyle{definition}
\newtheorem{definition}[theorem]{Definition}
\newtheorem{remark}{Remark}
\newcommand{\rn}[1]{%
	\textup{\expandafter{\romannumeral#1}}%
}
\newcommand{\RN}[1]{%
	\textup{\uppercase\expandafter{\romannumeral#1}}%
}
\DeclarePairedDelimiter{\ceil}{\lceil}{\rceil}
\newcommand{\stout}[1]{\ifmmode\text{\sout{\ensuremath{#1}}}\else\sout{#1}\fi}
\begin{document}


\title{\bf {Time Series Featurization \\ via Topological Data Analysis}}
\author{
		Kwangho Kim \thanks{Both authors contributed equally to this manuscript} \thanks{To whom correspondence should be addressed} \thanks{
		\{Department of Statistics \& Data Science, Machine Learning Department\}, Carnegie Mellon University, Pittsburgh, PA 15213. Email: \texttt{kwhanghk@stat.cmu.edu}} \hspace{1cm} 
		Jisu Kim \footnotemark[1] \thanks{
		Inria Saclay -- \^{I}le-de-France, Palaiseau, France.  Email: \texttt{jisu.kim@inria.fr}, URL: \texttt{http://www.stat.cmu.edu/~jisuk/}} \hspace{1cm} 
		Alessandro Rinaldo \thanks{Department of Statistics \& Data Science, Carnegie Mellon University, Pittsburgh, USA. Email: \texttt{arinaldo@cmu.edu}, URL: \texttt{http://www.stat.cmu.edu/~arinaldo/}}
}

\maketitle


\begin{abstract}
	We develop a novel algorithm for feature extraction in time series data by leveraging tools from topological data analysis. Our algorithm provides a simple, efficient way to successfully harness topological features of the attractor of the underlying dynamical system for an observed time series. The proposed methodology relies on the persistent landscapes and silhouette of the Rips complex obtained after a de-noising step based on principal components applied to a time-delayed embedding of a noisy, discrete time series sample. 
	We analyze the stability properties of the proposed approach and show that the resulting TDA-based features are robust to sampling noise. Experiments on synthetic and real-world data demonstrate the effectiveness of our approach. We expect our method to provide new insights on feature extraction from granular, noisy time series data.
\end{abstract}

\noindent%
{\it Keywords:}  stability properties,
topological features,
time-series feature extraction,
noisy time-series,
topological data analysis
\vfill

\newpage

\section{Introduction}

Time series analysis aims at predicting the evolution of a time varying phenomenon. However, time series data exhibit different characteristics than point cloud data obtained through random sampling. First of all, data points close in time are more likely to share common features than those which are further apart in time; this is referred to as serial \textit{autocorrelation}. Moreover, in time series forecasting we rarely know the underlying data generating distribution and the stochastic dependence across time associated to it, making applications of standard asymptotic results, such as the central limit theorem and the weak law of large numbers, problematic. Lastly, the granularity of the features are often very different from that of the outcome variable so simple summary statistics do not carry nuanced information about time-varying patterns.

These issues render the bulk of modern machine learning and feature engineering methods based on i.i.d. data inadequate, or at the very least cause a  degradation in their performance. Practitioners instead rely on stochastic models that have been developed based on the Ergodic theorem (e.g. ARIMA, SARIMA) or time-frequency analysis techniques (e.g. Fourier or Wavelet transform). More recently, due to the tremendous increase in the computing processing power and the developments of novel methodologies, researchers have started harnessing more innovational approaches once considered irrelevant to time series analysis (e.g., deep-learning architectures \cite{langkvist2014review, adhikari2013introductory, lyapplying2018, gullapalli2018learning}). One such approach that has attracted a growing interest is topological data analysis.

\textit{Topological data analysis} (TDA) is a recent and emerging field of data science that relies on topological and geometric tools to infer relevant features for possibly complex data, mainly using persistent homology \cite{chazal2017introduction}. It has provided alternative procedures to the traditional statistical and machine learning-based algorithms. TDA methods are now applied in many areas, ranging from document structure
representation \cite{zhu2013persistent} to cancer detection \cite{nanda2014simplicial}, task-optimal architecture of neural networks \cite{guss2018characterizing}, to name a few (see \cite{chazal2017introduction} for more examples). In particular, recently they have gained increasing importance for problems in signal processing \cite{perea2015sliding,chazal2013persistence,tralie2018quasi, seversky2016time,brown2009nonlinear,gamble2010exploring, pereira2015persistent, umeda2017time, venkataraman2016persistent, emrani2014persistent} or financial econometrics \cite{gidea2018topological, nobi2014correlation}.  We also refer to \cite{gholizadeh2018short} for a more comprehensive review of recent applications of TDA in time series analysis.


\textbf{Related Work.}
Even though there are a number of applications where researchers find harnessing persistent homology very effective in pattern recognition or signal analysis, only few of them provide a systematic way of constructing topological features for general time series with the underpinning theory. For example, \cite{perea2015sliding, tralie2018quasi} focus specifically on quantifying periodicity in particular types of signals. Studies such as \cite{gidea2018topological, seversky2016time,brown2009nonlinear,truong2017exploration,gamble2010exploring} have suggested that persistent homology can lead to a new type of features for time series or signal data, but their work mostly came away with simple visualizations or synthetic experiments. \cite{pereira2015persistent} used persistent homology to extract time series features for spatial clustering but their featurization relies on the partial information they subjectively choose.

In this paper, we propose a novel time series featurization method to construct informative and robust features based on topological data analysis, without any time-dependent structural, or probabilistic assumptions on the data generating process. Specifically, we provide a systematic procedure where one can successfully harness topological features of the attractor of the underlying dynamical system for an observed time series with favorable theoretical properties. Our proposed method could be an effective solution to feature engineering for granular time series data where a rich set of inherent patterns cannot be easily captured by traditional approaches. 

Our proposed method resembles work by \cite{umeda2017time, venkataraman2016persistent, emrani2014persistent} in that we rely on the delayed embedding and compute the persistent diagram to featurize topological properties. However, while all of the aforementioned methods simply used Betti numbers as their final TDA features, we use different theoretical tools for topological feature extraction. Importantly the theoretical underpinning for the aforementioned studies is very limited. Moreover, we incorporate an explicit denoising process to filter homological noise.

\textbf{Contributions.} The contributions of this paper are as follows.
(1) We develop a novel topological featurization algorithm based on homological persistence, after the state space reconstruction followed by a denoising process via principal component analysis. (2) We show our denoising process can preserve the significant topological features while reducing homological noise. (3) We provide a stability theorem which implies that our TDA features are robust to sampling noise. (4) Finally, we demonstrate the value of the proposed approach through real-world example where we perform not only a single time-point prediction but also complex pattern classifications.

\section{Topological Data Analysis} 
\label{sec:tda}
In this section we review basic concepts about persistence homology and stability theorems. We refer the reader to Appendix \ref{app:tda_details} and \cite{Hatcher2002,EdelsbrunnerH2010,ChazalCGGO2009,ChazalSGO2016} for further details and formal definitions. 
Throughout,  $\mathbb{X}$ will denote a subset of $\mathbb{R}^d$ and $\mathcal{X}$ a collection of points from it. We use the notation $\mathbb{B}_{\mathbb{R}^{d}}(x,r)$ (or $\mathbb{B}(x,r)$ when the space is clear from the context) for the open ball in $\mathbb{R}^{d}$ centered at $x$ with radius $r$.

\subsection{Simplicial complex, persistent homology, and diagrams}
\label{subsec:rips}

When inferring topological properties of $\mathbb{X}$ from its finite collection of samples $\mathcal{X}$, we rely on \textit{simplicial complex} $K$, a discrete structure built over the observed points to provide a topological approximation of the underlying space. The union of balls formed from the points of $\mathcal{X}$ has a nice combinatorial description. The \textit{\v{C}ech complex} of the set of balls $\{\mathbb{B}(x_i,r)\}_i$ is the simplicial complex whose vertices are the points $x_i \in \mathcal{X}$ and whose $k$-simplices correspond to $k+1$ balls with nonempty intersection. However, due to computational issues, the most common choice is the \emph{Vietoris-Rips complex} (or \emph{Rips complex}), where simplexes are built based on pairwise distances among its vertices.


\begin{definition} \label{def:background_rips}
	
	The \emph{Rips complex} $R_{\mathcal{X}}(r)$ is the simplicial complex defined as
	\begin{equation}
	R_{\mathcal{X}}(r):=\{ \sigma\subset\mathcal{X}:d(x_{i},x_{j})<2r,\forall x_{i},x_{j}\in\sigma\} .\label{eq:background_rips}
	\end{equation}
\end{definition}


Persistent homology is a multiscale approach to represent topological features.
A \emph{filtration} $\mathcal{F}$ is a collection of subspaces approximating the data points at different resolutions. 
\begin{definition}
	A \emph{filtration} $\mathcal{F}=\{ \mathcal{F}_{a}\}_{a\in\mathbb{R}}$
	is a collection of sets with $a\leq b$ implying  $\mathcal{F}_{a}\subset\mathcal{F}_{b}$.
	
\end{definition}

For a filtration $\mathcal{F}$ and for each $k\in\mathbb{N}_{0}=\mathbb{N}\cup\{0\}$, the
associated persistent homology $PH_{k}\mathcal{F}$ is a collection of $k$-th dimensional homology of each subset in $\mathcal{F}$. 

\begin{definition} 
	
	Let $\mathcal{F}$ be a filtration and let $k\in\mathbb{N}_{0}$.
	The associated $k$-th \emph{persistent homology} $PH_{k}\mathcal{F}$
	is a collection of vector spaces $\{ H_{k}\mathcal{F}_{a}\} _{a\in\mathbb{R}}$
	equipped with homomorphisms $\{ \imath_{k}^{a,b}\} _{a\leq b}$,
	where $H_{k}\mathcal{F}_{a}$ is the $k$-th dimensional homology of
	$\mathcal{F}_{a}$ and $\imath_{k}^{a,b}:H_{k}\mathcal{F}_{a}\to H_{k}\mathcal{F}_{b}$ is the homomorphism induced by the inclusion $\mathcal{F}_{a}\subset\mathcal{F}_{b}$. 
	
\end{definition}

For the $k$-th persistent homology
$PH_{k}\mathcal{F}$, the set of filtration levels at which a specific
homology appears is always an interval $[b,d)\subset[-\infty,\infty]$,
i.e. a specific homology is formed at some filtration value $b$
and dies when the inside hole is filled at another value $d>b$.

\begin{definition} 
	
	Let $\mathcal{F}$ be a filtration and let $k\in\mathbb{N}_{0}$.
	The corresponding $k$-th persistence diagram $Dgm_{k}(\mathcal{F})$ is a
	finite multiset of $(\mathbb{R}\cup\{\infty\})^{2}$, consisting
	of all pairs $(b,d)$ where $[b,d)$ is the interval of filtration
	values for which a specific homology appears in $PH_{k}\mathcal{F}$. $b$
	is called a birth time and $d$ is called a death time.
	
\end{definition}


One common metric to measure distance between two persistent diagrams is the \textit{bottleneck distance}.
\begin{definition}
	
	The bottleneck distance between the persistent homology of the filtrations
	$PH_{k}(f)$ and $PH_{k}(g)$ is defined by 
	\[
	d_{B}(PH_{k}(f),PH_{k}(g))=\inf\limits _{\gamma\in\Gamma}\sup\limits _{p \in Dgm_{k}(f)}\|p-\gamma(p)\|_{\infty},
	\]
	where the set $\Gamma$ consists of all the bijections $\gamma:Dgm_{k}(f)\cup Diag\rightarrow Dgm_{k}(g)\cup Diag$,
	and $Diag$ is the diagonal  $\{(x,x):\,x\in\mathbb{R}\}\subset\mathbb{R}^{2}$
	with infinite multiplicity. 
	
\end{definition}

The bottleneck distance imposes a metric structure on the space of persistence diagrams, which leads to the persistence stability theorem, i.e., small perturbations in the data implies at most small changes in the persistence diagrams in terms of the bottleneck distance (see Appendix \ref{app:tda_details} for more details).

\subsection{Landscapes and Silhouettes}

The space of persistence diagrams is a multiset, which makes it difficult to analyze and to feed as input to learning or statistical methods. Hence, it is useful to transform the persistent homology into a functional Hilbert space, where the analysis is easier and learning methods can be directly applied. Good examples include {persistent landscapes} \cite{bubenik2015statistical, bubenik2018persistence, bubenik2017persistence} and {silhouettes} \cite{chazal2015subsampling, chazal2014stochastic}, both of which are real-valued functions that further summarize the crucial information contained in a persistence diagram. We briefly introduce the two functions.

\textbf{Landscapes.} 
Given a persistence diagram $Dgm_{k}(\mathcal{F})$, we first define a set of functions $t \in \mathbb{R}_{+0} \mapsto \Lambda_p(t)$ for each birth-death pair $p=(b, d)$ in the persistence diagram as follows:
\begin{align} \label{def:tenting-function}
\Lambda_p(t) = 
\begin{cases}
t-b, & t \in [b, \frac{b+d}{2}]\\
d-t, & t \in (\frac{b+d}{2}, d ]\\
0, & \text{otherwise}
\end{cases}.
\end{align}
For each birth-death pair $p$, $\Lambda_p(\cdot)$ is piecewise linear. Then the \textit{persistence landscape} $\lambda$ of the persistent diagram $Dgm_{k}(\mathcal{F})$ is defined by the sequence of functions $\{\lambda_j\}_{j\in \mathbb{N}}$, where
\begin{equation} \label{def:landscape}
\lambda(j,t) = \underset{p\in Dgm_{k}(\mathcal{F})}{jmax} \Lambda_p(t), \quad  t\in[0,d_{max}], j\in \mathbb{N},
\end{equation}
and $jmax$ is the $j$-th largest value in the set. Hence, the persistence landscape is a function $\lambda : \mathbb{N} \times \mathbb{R} \rightarrow [0, d_{max}]$ for some large enough $d_{max}$. 

\textbf{Silhouette.} Let $Dgm_{k}(\mathcal{F})$ be a persistent diagram that contains $N$ off diagonal birth-death pairs $\{(b_j, d_j) \}_{j=1}^N$. We define the \textit{silhouette} as the following power-weighted function,
\begin{align} \label{def:silhouette}
\phi^{(q)}(t) = \frac{\sum_{j=1}^N \vert d_j - b_j \vert^q \Lambda_j(t) }{\sum_{j=1}^N \vert d_j - b_j \vert^q}.
\end{align}
for $0 < q < \infty$. The value $q$ can be thought of as controlling the trade-off parameter between uniformly treating all pairs in the persistence diagram and considering only the most persistent pairs. In fact when $q \rightarrow \infty$, silhouettes converge to the first order landscapes ($k=1$).		

Note that both landscape and silhouette functions can be evaluated over $\mathbb{R}$, and are easy to compute. Many recent studies including previously listed ones have revealed that these functional summaries of the persistence module have favorable theoretical properties, and can also be easily averaged and used for subsequent statistics and machine learning modeling \cite{chazal2014stochastic, bubenik2018persistence, bubenik2017persistence}.

\section{Homological noise reduction via PCA}
\label{sec:pca-denoise}

In this section we demonstrate how principal component analysis (PCA) can lead to effective methods for filtering out topological noise while preserving  significant topological features in the data. In detail, suppose we are observe
a finite point cloud that has been contaminated with additive errors. Altogether, these
error terms may bring in a significant amount of topological noise. A standard way to denoise the noises is to resort to
PCA.

As we will see below, the first $l$ principal components (PCs) preserves the topological features when the signal data points before contamination lie near a $l$-dimensional
linear space. To begin with, we may assume that the the signal data points belong to some $l$-dimensional linear space. Then the bottleneck distance
between the persistent homology of the signal data and the persistent
homology of the first $l$ leading PCs is roughly bounded by the magnitude the contamination, i.e. of the additive error. In particular, when there is no contamination, then the persistent
homology is preserved perfectly well after applying PCA.

\begin{prop} \label{prop:pca_stability_persistent_homology_linear}
	Let $\mathbb{X}=\{\mathbb{X}_{1},\ldots,\mathbb{X}_{n}\}\subset\mathbb{R}^{m}$
	and $X=\{X_{1},\ldots,X_{n}\}\subset\mathbb{R}^{m}$ be two point
	clouds of $\mathbb{R}^{m}$ with same size. Let $\mathbb{V}\subset\mathbb{R}^{m}$
	be an $l$-dimensional linear subspace passing $0$ satisfying $\mathbb{X}\subset\mathbb{V}$.
	Let $X^{l}$ be the outcome of the PCA on $X$. Let $D_{\mathbb{X}}$
	and $D_{X^{l}}$ denote the persistence diagrams of $\{R_{\mathbb{X}}(r)\}_{r\in\mathbb{R}}$
	and $\{R_{X^{l}}(r)\}_{r\in\mathbb{R}}$, respectively. Suppose $\frac{1}{n}\mathbb{X}^{\top}\mathbb{X}$
	has $\tilde{l}\leq l$ positive eigenvalues $\lambda_{1}\geq\cdots\geq\lambda_{\tilde{l}}>0$.
	Then the bottleneck distance between $D_{\mathbb{X}}$ and $D_{X^{l}}$
	is bounded as 
	\[
	d_{B}(D_{\mathbb{X}},D_{X^{l}})\leq\sup_{i}\left\Vert \mathbb{X}_{i}-X_{i}\right\Vert _{2}\left(1+\frac{2\sup_{i}\left\Vert \mathbb{X}_{i}\right\Vert _{2}\left(\sup_{i}\left\Vert \mathbb{X}_{i}-X_{i}\right\Vert _{2}+2\sup_{i}\left\Vert \mathbb{X}_{i}\right\Vert _{2}\right)}{\lambda_{\tilde{l}}}\right).
	\]
	In particular, when $\mathbb{X}=X$, then the persistent homology
	of $\mathbb{X}$ is preserved after applying PCA, i.e.
	\[
	d_{B}(D_{\mathbb{X}},D_{X^{l}})=0.
	\]
	
\end{prop}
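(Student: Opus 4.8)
The idea is to reduce everything to a perturbation bound on the PCA projection and then invoke the persistence stability theorem. Write $\eta := \sup_i\|\mathbb{X}_i - X_i\|_2$ and $M := \sup_i\|\mathbb{X}_i\|_2$, let $\Pi^l$ be the orthogonal projection of $\mathbb{R}^m$ onto the $l$-dimensional PCA subspace of $X$ (the span of the top $l$ eigenvectors of $B := \tfrac1n X^\top X$), so that $X^l_i = \Pi^l X_i$, and set $A := \tfrac1n\mathbb{X}^\top\mathbb{X}$. The whole proof comes down to showing
\[
\sup_i\|\mathbb{X}_i - X^l_i\|_2 \;\le\; \eta\Bigl(1 + \tfrac{2M(\eta+2M)}{\lambda_{\tilde{l}}}\Bigr).
\]
Indeed, if every vertex moves by at most $\epsilon := \sup_i\|\mathbb{X}_i - X^l_i\|_2$, then a simplex of $R_{\mathbb{X}}(r)$ (all pairwise distances $<2r$) becomes a simplex of $R_{X^l}(r+\epsilon)$ (all pairwise distances $<2r+2\epsilon = 2(r+\epsilon)$), and symmetrically; hence $\{R_{\mathbb{X}}(r)\}_r$ and $\{R_{X^l}(r)\}_r$ are $\epsilon$-interleaved and the persistence stability theorem gives $d_B(D_{\mathbb{X}},D_{X^l})\le\epsilon$. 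The last assertion of the proposition is then just the case $\eta = 0$.

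To bound $\|\mathbb{X}_i - X^l_i\|_2$, first split, using $\|\Pi^l\|_{\mathrm{op}}\le 1$,
\[
\|\mathbb{X}_i - \Pi^l X_i\|_2 \le \|(I-\Pi^l)\mathbb{X}_i\|_2 + \|\Pi^l(\mathbb{X}_i - X_i)\|_2 \le \|(I-\Pi^l)\mathbb{X}_i\|_2 + \eta .
\]
Since $\mathbb{X}_i\in\mathrm{span}\{\mathbb{X}_1,\dots,\mathbb{X}_n\} = \mathrm{range}(A)$ and $A$ has exactly $\tilde{l}$ positive eigenvalues, $\mathbb{X}_i = \Pi_0\mathbb{X}_i$, where $\Pi_0$ is the orthogonal projection onto the span of the top $\tilde{l}$ eigenvectors of $A$; hence
\[
\|(I-\Pi^l)\mathbb{X}_i\|_2 = \|(I-\Pi^l)\Pi_0\mathbb{X}_i\|_2 \le \|(I-\Pi^l)\Pi_0\|_{\mathrm{op}}\,\|\mathbb{X}_i\|_2 \le M\,\|(I-\Pi^l)\Pi_0\|_{\mathrm{op}} .
\]
So it remains to control the subspace discrepancy $\|(I-\Pi^l)\Pi_0\|_{\mathrm{op}}$ together with the matrix perturbation $\|A-B\|_{\mathrm{op}}$.

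For the matrix perturbation, writing $X_i = \mathbb{X}_i + e_i$ with $\|e_i\|_2\le\eta$, a rank-one expansion gives $\|A-B\|_{\mathrm{op}} = \bigl\|\tfrac1n\sum_i(\mathbb{X}_ie_i^\top + e_i\mathbb{X}_i^\top + e_ie_i^\top)\bigr\|_{\mathrm{op}} \le 2M\eta + \eta^2 = \eta(\eta+2M)$. For the subspace discrepancy, $I-\Pi^l$ projects onto eigenvectors of $B$ with eigenvalue at most $\mu_{l+1}$, the $(l+1)$-th eigenvalue of $B$, and by Weyl's inequality $\mu_{l+1}\le\lambda_{l+1} + \|A-B\|_{\mathrm{op}} = \|A-B\|_{\mathrm{op}}$ since $l+1>\tilde{l}$; thus the part of the spectrum seen by $\Pi_0$ and the part seen by $I-\Pi^l$ are separated by at least $\lambda_{\tilde{l}} - \|A-B\|_{\mathrm{op}}$, and the Davis--Kahan $\sin\Theta$ theorem yields $\|(I-\Pi^l)\Pi_0\|_{\mathrm{op}} \le \|A-B\|_{\mathrm{op}}/(\lambda_{\tilde{l}} - \|A-B\|_{\mathrm{op}})$ whenever the denominator is positive. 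If $\|A-B\|_{\mathrm{op}}\le\lambda_{\tilde{l}}/2$ this is at most $2\|A-B\|_{\mathrm{op}}/\lambda_{\tilde{l}}$; if $\|A-B\|_{\mathrm{op}}>\lambda_{\tilde{l}}/2$ then $2\|A-B\|_{\mathrm{op}}/\lambda_{\tilde{l}}>1\ge\|(I-\Pi^l)\Pi_0\|_{\mathrm{op}}$ holds trivially, so in every case $\|(I-\Pi^l)\Pi_0\|_{\mathrm{op}}\le 2\eta(\eta+2M)/\lambda_{\tilde{l}}$. Chaining the displays gives $\|\mathbb{X}_i - X^l_i\|_2 \le \eta + M\cdot 2\eta(\eta+2M)/\lambda_{\tilde{l}} = \eta\bigl(1+2M(\eta+2M)/\lambda_{\tilde{l}}\bigr)$, which is the required bound.

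The step I expect to need the most care is the subspace estimate: one matches a top-$\tilde{l}$ eigenspace of $A$ against a top-$l$ eigenspace of $B$ with $l\ge\tilde{l}$, so the eigenvalue gap is not between matched spectral blocks but is created by the fact that the eigenvalues of $A$ beyond the $\tilde{l}$-th vanish (pinning the $(l+1)$-th eigenvalue of $B$ near $0$ via Weyl), and one must also handle gracefully the regime where $\|A-B\|_{\mathrm{op}}$ is too large for Davis--Kahan to give a useful gap, falling back on the trivial bound $\|(I-\Pi^l)\Pi_0\|_{\mathrm{op}}\le 1$. A minor but worthwhile point to state is the PCA convention being used here (projection onto the leading eigenvectors of the uncentered second-moment matrix $\tfrac1n X^\top X$, consistent with the hypothesis placed on $\tfrac1n\mathbb{X}^\top\mathbb{X}$) and the observation that re-expressing $\Pi^l X_i$ in $l$-dimensional coordinates is an isometry, so that $D_{X^l}$ is well defined either way.
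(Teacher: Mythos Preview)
Your proof is correct and follows essentially the same route as the paper: bound $\sup_i\|\mathbb{X}_i-\Pi^l X_i\|_2$ via a Davis--Kahan-type subspace perturbation estimate combined with the rank-one expansion of $\tfrac1n X^\top X-\tfrac1n\mathbb{X}^\top\mathbb{X}$, then invoke Rips stability. The only cosmetic differences are that the paper cites the Yu--Wang--Samworth variant of Davis--Kahan (which yields the factor $2/\lambda_{\tilde{l}}$ directly from the population gap $\lambda_{\tilde{l}}-\lambda_{\tilde{l}+1}=\lambda_{\tilde{l}}$, avoiding your Weyl-plus-case-split argument) and phrases the final step through the Gromov--Hausdorff bound and the Rips stability theorem rather than an explicit interleaving.
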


More generally, suppose the signal data points lie near some
$l$-dimensional linear space. Then the bottleneck distance between
the persistent homology of the signal data and the persistent homology
of the leading PCs is bounded by an amount proportional to the level of contamination 
plus the distance from the signal data to the approximating linear space.

\begin{cor} \label{cor:pca_stability_persistent_homology_general}
	Let $\mathbb{X}=\{\mathbb{X}_{1},\ldots,\mathbb{X}_{n}\}\subset\mathbb{R}^{m}$
	and $X=\{X_{1},\ldots,X_{n}\}\subset\mathbb{R}^{m}$ be two point
	clouds of $\mathbb{R}^{m}$ with same size. Let $\mathbb{V}\subset\mathbb{R}^{m}$
	be an $l$-dimensional linear subspace passing $0$, and let $\mathbb{X}_{\mathbb{V}}$
	be the projections of $\mathbb{X}$ on $\mathbb{V}$, i.e. $(\mathbb{X}_{\mathbb{V}})_{i}=\Pi_{\mathbb{V}}(\mathbb{X}_{i})$.
	Let $X^{l}$ be the outcome of the PCA on $X$. Let $D_{\mathbb{X}}$
	and $D_{X^{l}}$ denote the persistence diagrams of $\{R_{\mathbb{X}}(r)\}_{r\in\mathbb{R}}$
	and $\{R_{X^{l}}(r)\}_{r\in\mathbb{R}}$, respectively. Suppose $\frac{1}{n}\mathbb{X}_{\mathbb{V}}^{\top}\mathbb{X}_{\mathbb{V}}$
	has $\tilde{l}\leq l$ positive eigenvalues $\lambda_{1}\geq\cdots\geq\lambda_{\tilde{l}}>0$.
	Then, 
	\begin{align*}
	d_{B}(D_{\mathbb{X}},D_{X^{l}}) & \leq d_{H}(\mathbb{X},\mathbb{V})+\left(d_{H}(\mathbb{X},\mathbb{V})+\sup_{i}\left\Vert \mathbb{X}_{i}-X_{i}\right\Vert _{2}\right)\\
	& \ \times\left(1+\frac{2\sup_{i}\left\Vert \mathbb{X}_{i}\right\Vert _{2}\left(d_{H}(\mathbb{X},\mathbb{V})+\sup_{i}\left\Vert \mathbb{X}_{i}-X_{i}\right\Vert _{2}+\sup_{i}\left\Vert \mathbb{X}_{i}\right\Vert _{2}\right)}{\lambda_{\tilde{l}}}\right).
	\end{align*}
	
\end{cor}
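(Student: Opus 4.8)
The plan is to reduce the general case to the already-established Proposition~\ref{prop:pca_stability_persistent_homology_linear} by inserting the projected point cloud $\mathbb{X}_{\mathbb{V}}$ as an intermediate object and using bottleneck stability of the Rips filtration. Since the bottleneck distance is a metric on persistence diagrams, the triangle inequality gives
\[
d_B(D_{\mathbb{X}},D_{X^l}) \;\le\; d_B(D_{\mathbb{X}},D_{\mathbb{X}_{\mathbb{V}}}) \;+\; d_B(D_{\mathbb{X}_{\mathbb{V}}},D_{X^l}),
\]
and I would control the two summands by two different mechanisms.

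For the first summand I would invoke the persistence stability theorem. Pairing $\mathbb{X}_i$ with $(\mathbb{X}_{\mathbb{V}})_i=\Pi_{\mathbb{V}}(\mathbb{X}_i)$ gives a correspondence under which $\|\mathbb{X}_i-(\mathbb{X}_{\mathbb{V}})_i\|_2 = d(\mathbb{X}_i,\mathbb{V}) \le d_H(\mathbb{X},\mathbb{V})$ for every $i$; hence the radius at which any fixed simplex enters $\{R_{\mathbb{X}}(r)\}_r$ versus $\{R_{\mathbb{X}_{\mathbb{V}}}(r)\}_r$ moves by at most $d_H(\mathbb{X},\mathbb{V})$, the two filtrations are $d_H(\mathbb{X},\mathbb{V})$-interleaved, and therefore $d_B(D_{\mathbb{X}},D_{\mathbb{X}_{\mathbb{V}}}) \le d_H(\mathbb{X},\mathbb{V})$.

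For the second summand I would apply Proposition~\ref{prop:pca_stability_persistent_homology_linear} with the pair $(\mathbb{X}_{\mathbb{V}},X)$ playing the role of $(\mathbb{X},X)$. Its hypotheses hold by construction: $\mathbb{X}_{\mathbb{V}}\subset\mathbb{V}$, the subspace $\mathbb{V}$ is $l$-dimensional and passes through $0$, and $\frac{1}{n}\mathbb{X}_{\mathbb{V}}^\top\mathbb{X}_{\mathbb{V}}$ has exactly $\tilde l\le l$ positive eigenvalues $\lambda_1\ge\cdots\ge\lambda_{\tilde l}>0$ by assumption, while the PCA output $X^l$ is unchanged. This yields
\[
d_B(D_{\mathbb{X}_{\mathbb{V}}},D_{X^l}) \le \sup_i\|(\mathbb{X}_{\mathbb{V}})_i-X_i\|_2\left(1+\frac{2\sup_i\|(\mathbb{X}_{\mathbb{V}})_i\|_2\bigl(\sup_i\|(\mathbb{X}_{\mathbb{V}})_i-X_i\|_2+2\sup_i\|(\mathbb{X}_{\mathbb{V}})_i\|_2\bigr)}{\lambda_{\tilde l}}\right).
\]
It then remains to re-express everything in terms of the original data: by the triangle inequality $\sup_i\|(\mathbb{X}_{\mathbb{V}})_i-X_i\|_2 \le d_H(\mathbb{X},\mathbb{V})+\sup_i\|\mathbb{X}_i-X_i\|_2$, and since orthogonal projection onto a subspace through the origin is $1$-Lipschitz, $\sup_i\|(\mathbb{X}_{\mathbb{V}})_i\|_2 \le \sup_i\|\mathbb{X}_i\|_2$. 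Substituting these two estimates into the bound above and adding $d_H(\mathbb{X},\mathbb{V})$ from the first summand produces the claimed inequality.

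I do not expect a serious obstacle: once Proposition~\ref{prop:pca_stability_persistent_homology_linear} is available the argument is essentially bookkeeping. The one genuinely load-bearing point is that the eigenvalue condition must be imposed on the \emph{projected} cloud $\mathbb{X}_{\mathbb{V}}$ rather than on $\mathbb{X}$ — this is precisely what makes the eigengap $\lambda_{\tilde l}$ in Proposition~\ref{prop:pca_stability_persistent_homology_linear} applicable, since $\mathbb{X}_{\mathbb{V}}$ (unlike $\mathbb{X}$) genuinely lies in $\mathbb{V}$. The only place demanding a little care is tracking the cross terms in the numerator (using $\sup_i\|X_i\|_2\le\sup_i\|\mathbb{X}_i\|_2+\sup_i\|\mathbb{X}_i-X_i\|_2$ where a bound on $\sup_i\|X_i\|_2$ is needed) so that the constants collapse exactly to the stated form; and one should note that $d_H(\mathbb{X},\mathbb{V})$ dominates $\sup_i d(\mathbb{X}_i,\mathbb{V})$, which is all that is used for the per-point projection residuals.
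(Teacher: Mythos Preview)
Your proposal is correct and follows essentially the same route as the paper: triangle-inequality split through $\mathbb{X}_{\mathbb{V}}$, Rips stability for the first summand, and Proposition~\ref{prop:pca_stability_persistent_homology_linear} applied to $(\mathbb{X}_{\mathbb{V}},X)$ for the second, followed by the same two substitutions $\sup_i\|(\mathbb{X}_{\mathbb{V}})_i\|_2\le\sup_i\|\mathbb{X}_i\|_2$ and $\sup_i\|(\mathbb{X}_{\mathbb{V}})_i-X_i\|_2\le d_H(\mathbb{X},\mathbb{V})+\sup_i\|\mathbb{X}_i-X_i\|_2$. The only superfluous remark is your aside about bounding $\sup_i\|X_i\|_2$: Proposition~\ref{prop:pca_stability_persistent_homology_linear} never involves that quantity, so no such estimate is needed.
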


Hence, application of PCA in the scenario will preserve the topological features that are well captured by the approximating  $l$-dimensional linear
space.

We now discuss the impact of PCA on the topological noise arising from the contamination process.
Since such error contamination is the result of adding full-dimensional noise,
the corresponding topological noises are also full-dimensional.
Suppose first that the topological noise is topologically a sphere of dimension
$l$ or higher. When projected to a $l$-dimensional linear
space, this sphere reduces to a disk of dimension $l$.  Hence, this topological noise is eliminated after the PCA. See Figure \ref{fig:pca_homology_high} in Appendix \ref{sec:appendix-PCA-example} for an illustrative example. When the topological noise is of
dimension smaller than $l$, it is unlikely that it will be aligned perfectly with the linear subspace where the data
are projected by PCA. As a result, only the topological topological noise along
the orthogonal direction of the linear subspace is reduced. In particular,
if the topological noise is aligned with the orthogonal direction
of the linear subspace, then it will be eliminated entirely. See Figure \ref{fig:pca_homology} in Appendix \ref{sec:appendix-PCA-example} for an illustrative example.

In summary, applying PCA will reduce the impact of the topological noise in two ways: it will  reduce high-dimensional topological noise and, at the same time, it will eliminate
lower-dimensional topological  noise along the orthogonal direction
of the linear space where the data are projected. Meanwhile,
applying PCA still preserves the topological features that are lying
on or near a linear space of appropriate dimension. 

\section{Methodology}
\subsection{State space reconstruction via time-delayed embedding}

A time series can be considered a series of projections of the observed states from a \textit{dynamical system} which is a rule for time evolution on a state space. Since it is in general very difficult to reconstruct the state space and transition rules of the dynamical system from observed time series data without a priori knowledge \cite{tong1990non, eckmann1992fundamental}, we typically rely on \textit{attractors}, each of which is a set of numerical values toward which the given dynamical system eventually evolves over time, passing through various transitions determined by the system \cite{tong1990non, packard1980geometry, eckmann1992fundamental}. Since we need infinitely many points to construct any attractor, we instead use a \textit{quasi-attractor} that can be derived from a finite sample set. One of the most well-known approaches is time-delayed embedding including Takens' delay embedding \cite{takens1981detecting}. The time-delayed embedding theorems have been applied to a wide range of scientific disciplines to study chaotic or noisy time series (to name a few, \cite{torku2016takens, casdagli1989nonlinear, huang2014manifold, basharat2009time, theiler1992testing, kostelich1993noise, jayawardena1994analysis,sugihara1990nonlinear}). We refer to Appendix \ref{app:dynamic-systems} for more information on dynamical systems.

We adopt the time-delayed sliding window embedding following the work of \cite{perea2015sliding,tralie2018quasi}, in which we slightly tailor their original approach. For now, we only consider one dimensional time series data. Let $f(t)$ be a function defined over the non-negative reals $\mathbb{R}_{+0}$. Since time series featurization will be done only upon finitely bounded time interval, it is sufficient to consider a continuous function $f(t)$ on an interval $t\in[0,T]$ with $0<T<\infty$. First, let 
$ x = \{ x_0, x_1, ... , x_N \} $ be a sequence of equi-interval samples from the function $f$. Then, let the \textit{sliding window mapping} $SW_{m,\tau}f: R \rightarrow R^m$ be
\[
SW_{m,\tau}f(t) \coloneqq \big[f\left(t-(m-1)\tau\right),..., f(t-\tau), f(t)\big]^\top.
\]

In other words, for a fixed function $f$ the map $t \mapsto SW_{m,\tau}f(t)$ generates the $m$ most recent equally-spaced points up to time $t$ with a predefined gap $\tau \in \mathbb{N}$. In fact, parameters that appear in the above definition characterize the transformation according to the Takens' embedding theorem; $\tau$ will be the delay parameter and $m$ will be our embedding dimension. Now, given $\tau, m$ we construct the trajectory matrix $X \in \mathbb{R}^{\{N-(m-1)\tau\}\times m}$ as follows:
\begin{equation} \label{trajectory_mat}
\small
X = 
\begin{bmatrix}
SW_{m,\tau}f\left((m-1)\tau\right)^\top  \\
SW_{m,\tau}f\left(1+(m-1)\tau\right)^\top \\
\vdots  \\
SW_{m,\tau}f\left(T\right)^\top
\end{bmatrix} 
= 
\begin{bmatrix}
X_{(m-1)\tau} \\
X_{1+(m-1)\tau} \\
\vdots  \\
X_{N}  
\end{bmatrix} 
=
\begin{bmatrix}
x_{0} & x_{\tau} & \dots  & x_{(m-1)\tau} \\
x_{1} & x_{1+\tau} &  \dots  & x_{1+(m-1)\tau} \\
\vdots & \vdots &  \ddots & \vdots \\
x_{N-(m-1)\tau} & x_{N-(m-2)\tau} &  \dots  & x_{N}.
\end{bmatrix}
\normalsize
\end{equation}

In the above construction, we set $f(0)=x_1, ..., f(T)=x_N$. 
Our quasi-attractor is then the topological pattern traced out by $X$ in $\mathbb{R}^m$. By the  generalized Takens' theorem \cite{robinson2005topological} there always exists a pair $(\tau,m)$ such that the vectors generated via the transformation described in (\ref{trajectory_mat}) are on a manifold topologically equivalent to the attractor of the original dynamical system for $f$ \cite{torku2016takens}. In fact, the last display in (\ref{trajectory_mat}) is a typical expression for Takens' delay embedding transformation \cite{venkataraman2016persistent, emrani2014persistent, pereira2015persistent, seversky2016time}. We will later see this construction is beneficial for the analysis of our stability results in section \ref{sec:stability-thm}.

The transformation procedure described above requires the choice of the embedding dimension, $m$, and the time delay, $\tau$. There is no generic optimal method for choosing these parameters without any structural assumptions \cite{small2003optimal,strozzi2002application}, and practitioners rather rely on heuristic methods. $m$ must be sufficiently large to satisfy $m > 2d_0$ but we never know $d_0$ in reality. To determine $d_0$, we count linearly independent vectors in the singular value decomposition of $X$ following the work of \cite{palu1992singular, torku2016takens}, and pick a set of consecutive numbers equal to or greater than that. As for $\tau$, this parameter can be set using  autocorrelation or mutual information  \cite{small2003optimal}. Nonetheless $\tau$ is more often chosen in a qualitative manner. In fact, it is possible that several $(m, \tau)$'s yield the same topological equivalence. Thus in the simulation, we use a grid of values for $(m, \tau)$ and report the best result.

\subsection{Feature extraction via persistent homology} \label{sec:featurization-theory}

In the previous section, we have shown how to construct a vector space that is on a manifold topologically equivalent to an attractor of the original dynamical system. In this section, we develop a methodology with which we can translate the valuable information about the quasi-attractor into vector-valued features based on persistent homology.

TDA is known to be very useful to analyze point cloud data \cite{carlsson2014topological}. As seen in the previous subsection, we need to analyze the topological structure (the quasi-attractor) of the point cloud in the vector space formed by the trajectory matrix in (\ref{trajectory_mat}).
To this end, we form the Rips complex and compute the corresponding persistent diagram. Then we use vectorized versions of the persistent landscapes (\ref{def:landscape}) and silhouettes (\ref{def:silhouette}) as our final TDA features. 

Before computing the persistent homology, we may want to perform a denoising process to reduce insignificant (noise-like) homological features. As shown in Section \ref{sec:pca-denoise}, this can be done via PCA so that our point cloud in $\mathbb{R}^m$ is transformed into $\mathbb{R}^l$ where $l \ll m$. Then we construct the Rips complex in this new space to evaluate the persistent homology.


\subsection{Algorithm}
In this section, we present our algorithm to implement the proposed featurization method in a step-by-step manners. First, we form a point cloud  in $\mathbb{R}^m$ from the observed time series data via (\ref{trajectory_mat}) with our choice of $(m, \tau)$. After the point cloud embedding, we perform PCA on $X$ to reduce the dimension from $\mathbb{R}^m$ to $\mathbb{R}^l$, and obtain $X^l$. Next, we construct the Rips filtration $R_{X^{l}}=\{ R_{X^{l}}(r)\} _{r\in\mathbb{R}}$ and compute the corresponding persistence diagram $Dgm(X^l)$, and obtain the landscapes or silhouette functions accordingly. Finally, we vectorize each of the landscapes or silhouette functions so that we can use them as inputs to machine learning methods. For the vectorization step, we set a resolution parameter $\kappa$ and then take $\ceil{d_{max}/\kappa}$ evenly spaced samples from each landscape or silhouette function. If we let $n_{dim} \in \mathbb{N}^{+0}$ denote our maximum homological dimension, then in the end we will have $(n_{dim}+1)$ different vectors of size $\ceil{d_{max}/\kappa}$ as our final TDA features (or one long vector by concatenating all the $(n_{dim}+1)$ vectors). We detail the entire procedure in Algorithm \ref{algorithm:TDA-featurization}.

\begin{algorithm}
	\caption{ Time Series Featurization via TDA} \label{algorithm:TDA-featurization}
	\textbf{Input:} Time series sequence $x = \{ x_1, x_2, ... , x_N \}$ with a fixed sampling rate
	
	\begin{enumerate}
		\item Construct the point cloud $X \in \mathbb{R}^m$ via the trajectory matrix (\ref{trajectory_mat}) with $m, \tau$
		\item Perform PCA on $X$ and obtain $X^l \in \mathbb{R}^l$ ($l \ll m$)
		\item Construct the Rips complex $R_{X^l}$ and compute the persistence diagram $Dgm(X^l)$ with $d_{max}$
		\item From $Dgm(X^l)$, compute the landscape $\lambda(k,t)$ (\ref{def:landscape}) or the silhouette $\phi^{(p)}(t)$ (\ref{def:silhouette})
		\item Vectorize $\lambda(k,t)$ or $\phi^{(p)}(t)$ at resolution $\kappa$
	\end{enumerate}
	
	\textbf{Output:} $(n_{dim}+1)$ different vectors of size $\ceil{d_{max}/\kappa}$
\end{algorithm}

\begin{remark}
	Another crucial reason for performing PCA in Step 2 of Algorithm \ref{algorithm:TDA-featurization} is to reduce the computational cost of constructing the simplicial complex. Of course, other dimensionality reduction techniques may be considered in place of PCA, such as LLE, ISOMAP, Kernel PCA or diffusion maps, which may better adapt to the intrinsic geometric structure of the data.
\end{remark}

Even though our algorithm is basically for one dimensional time series, one can extend our method to the multivariate case by simply concatenating all the TDA features,  each of which  obtained from different time series. Finally, we remark that, due to its construction,  the resulting topological feature may be sparse, in the sense of contain many zeros. 



\section{Stability Analysis} \label{sec:stability-thm}


In this section, we demonstrate that the proposed method is robust against noise and finite sampling errors. Suppose there is a true signal function $f:[0,T]\to\mathbb{R}$ which is corrupted with additive noise $\epsilon:[0,T]\to\mathbb{R}$, and we end up observing $N$ sample points $x$ from $f+\epsilon$ uniformly over $[0,T]$. Our goal is to infer topological features of the Takens embedding of $f$ only using $x$. In what follows, we provide a novel stability theorem which guarantees that the persistence diagram computed from Algorithm \ref{algorithm:TDA-featurization} does not arbitrary get farther away from our target (true) persistence diagram.


\begin{prop} \label{prop:analysis_stability_persistent_homology}
	Let $f:[0,T]\to\mathbb{R}$ be a Lipschitz function with Lipschitz constant $L_{f}$. Suppose $x$ is a time-series of $N$-samples from $f+\epsilon$ within the time interval $[0,T]$. Let $\mathbb{X}:=\{SW_{m,\tau}f(t):\,t\in [0,T]\}$ be the Takens embedding of $f$. Let $X$ be a point cloud constructed via \eqref{trajectory_mat} with fixed $m \in \mathbb{R}$ and $X^{l}$ be the outcome of the PCA on $X$. Let $\mathbb{X}^{N}:=\{SW_{m,\tau}f(t):\,t=(0+(m-1)\tau),(1+(m-1)\tau),\ldots,T\}$, and suppose $\mathbb{X}^{N}$ is contained in $l$-dimensional linear space and  $\frac{(\mathbb{X}^{N})^{\top}\mathbb{X}^{N}}{N-(m-1)\tau}$ has $\tilde{l}\leq l$ positive eigenvalues
	$\lambda_{1}\geq\ldots\geq\lambda_{\tilde{l}}>0$ when $\mathbb{X}^{N}$ is understood as a matrix. Let $D_{X}, D_{f}$ denote the persistence diagrams of $\{R_{X^{l}}(r)\}_{r\in\mathbb{R}}$ and $\{R_{\mathbb{X}}(r)\}_{r\in\mathbb{R}}$, respectively. Then, 
	\begin{equation}
	d_B(D_{X}, D_{f}) \leq \sqrt{m}\left\Vert \epsilon\right\Vert _{\infty}+\frac{2m^{\frac{3}{2}}L_{f}T\left\Vert \epsilon\right\Vert _{\infty}(2L_{f}T+\left\Vert \epsilon\right\Vert _{\infty})}{\lambda_{\tilde{l}}}+\frac{\sqrt{m}L_{f}T}{N}.\label{eq:analysis_stability_persistent_homology}
	\end{equation}
\end{prop}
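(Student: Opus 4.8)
The plan is to insert the noise-free, discretely sampled point cloud as an intermediate object and split the bottleneck distance with the triangle inequality. Write $n:=N-(m-1)\tau$ and view $\mathbb{X}^{N}$ as a point cloud of $n$ points in $\mathbb{R}^{m}$, its $i$-th point $(\mathbb{X}^{N})_{i}$ being the $i$-th row of the trajectory construction \eqref{trajectory_mat} applied to the clean signal $f$; let $D_{\mathbb{X}^{N}}$ denote the persistence diagram of $\{R_{\mathbb{X}^{N}}(r)\}_{r\in\mathbb{R}}$. Then
\[
d_B(D_{X},D_{f})\;\le\;d_B\bigl(D_{X},D_{\mathbb{X}^{N}}\bigr)+d_B\bigl(D_{\mathbb{X}^{N}},D_{f}\bigr),
\]
and I would bound the first summand, which captures the combined effect of the additive noise $\epsilon$ and of PCA, and the second, which captures the effect of finitely sampling the Takens curve, separately.

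For the second summand, observe that $t\mapsto SW_{m,\tau}f(t)$ is $\sqrt{m}\,L_{f}$-Lipschitz: each of its $m$ coordinates is a time-shifted copy of the $L_{f}$-Lipschitz function $f$, so a time increment $h$ moves the embedded point by at most $\sqrt{m}\,L_{f}h$ in Euclidean norm. Since consecutive samples are $T/N$ apart in the time parameter, $\mathbb{X}^{N}$ is $\tfrac{\sqrt{m}\,L_{f}T}{N}$-dense in $\mathbb{X}$ in Hausdorff distance, and the persistence stability theorem for Rips filtrations (Appendix \ref{app:tda_details}) gives $d_B(D_{\mathbb{X}^{N}},D_{f})\le\tfrac{\sqrt{m}\,L_{f}T}{N}$, which is exactly the last term of \eqref{eq:analysis_stability_persistent_homology}.

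For the first summand I would invoke Proposition \ref{prop:pca_stability_persistent_homology_linear} with $(\mathbb{X}^{N},X)$ playing the role of $(\mathbb{X},X)$: both are point clouds of the same size $n$ in $\mathbb{R}^{m}$; by hypothesis $\mathbb{X}^{N}$ lies in an $l$-dimensional linear subspace through the origin; $X^{l}$ is the PCA output of $X$; and $\tfrac{1}{n}(\mathbb{X}^{N})^{\top}\mathbb{X}^{N}$ has $\tilde{l}\le l$ positive eigenvalues $\lambda_{1}\ge\cdots\ge\lambda_{\tilde{l}}>0$ as assumed. The proposition then yields
\[
d_B\bigl(D_{X},D_{\mathbb{X}^{N}}\bigr)\;\le\;A\left(1+\frac{2B(A+2B)}{\lambda_{\tilde{l}}}\right),\qquad A:=\sup_{i}\|(\mathbb{X}^{N})_{i}-X_{i}\|_{2},\quad B:=\sup_{i}\|(\mathbb{X}^{N})_{i}\|_{2}.
\]
Now the $i$-th rows of $X$ and $\mathbb{X}^{N}$ differ coordinatewise exactly by values of $\epsilon$ at the sampled times, so $A\le\sqrt{m}\,\|\epsilon\|_{\infty}$; and each coordinate of $(\mathbb{X}^{N})_{i}$ is a value of $f$, so $B\le\sqrt{m}\,\|f\|_{\infty}\le\sqrt{m}\,L_{f}T$, using $|f(t)|=|f(t)-f(0)|\le L_{f}t\le L_{f}T$ under the normalization $f(0)=0$. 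The right-hand side above is increasing in $A$ and in $B$ on $[0,\infty)^{2}$, so substituting $A=\sqrt{m}\,\|\epsilon\|_{\infty}$, $B=\sqrt{m}\,L_{f}T$ and expanding $A+\tfrac{2A^{2}B+4AB^{2}}{\lambda_{\tilde{l}}}$ gives exactly $\sqrt{m}\,\|\epsilon\|_{\infty}+\tfrac{2m^{3/2}L_{f}T\,\|\epsilon\|_{\infty}(2L_{f}T+\|\epsilon\|_{\infty})}{\lambda_{\tilde{l}}}$; adding the bound on the second summand yields \eqref{eq:analysis_stability_persistent_homology}.

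The main obstacle is bookkeeping rather than a new idea: one must check that the hypotheses of Proposition \ref{prop:pca_stability_persistent_homology_linear} match verbatim --- in particular that $X$ really is $\mathbb{X}^{N}$ perturbed coordinatewise by the sampling noise (so that its PCA output $X^{l}$ is the object appearing in Algorithm \ref{algorithm:TDA-featurization}), that the $1/n$ normalizations of the second-moment matrices agree, and that the subspace containing $\mathbb{X}^{N}$ passes through the origin. The single analytically delicate point is the estimate $\|f\|_{\infty}\le L_{f}T$, which is not implied by the Lipschitz property alone and relies on the normalization $f(0)=0$ (equivalently, on re-centering the cloud, which is harmless for the Rips diagram and consistent with centered PCA); all remaining steps reduce to the Rips stability theorem and the already-established Proposition \ref{prop:pca_stability_persistent_homology_linear}.
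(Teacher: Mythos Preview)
Your proposal is correct and follows essentially the same route as the paper: insert $\mathbb{X}^{N}$ as an intermediate, bound the PCA/noise term via Proposition~\ref{prop:pca_stability_persistent_homology_linear} with $A\le\sqrt{m}\|\epsilon\|_{\infty}$ and $B\le\sqrt{m}L_{f}T$, and bound the discretization term via the $\sqrt{m}L_{f}$-Lipschitz property of $t\mapsto SW_{m,\tau}f(t)$. The only organizational difference is that the paper applies the triangle inequality at the level of Hausdorff/Gromov--Hausdorff distances and invokes Rips stability once at the end, whereas you split at the bottleneck-distance level and invoke Rips stability (directly or through Proposition~\ref{prop:pca_stability_persistent_homology_linear}) on each piece; this is cosmetic. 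Your explicit flag that $\|f\|_{\infty}\le L_{f}T$ rests on the normalization $f(0)=0$ is apt --- the paper uses $\|\mathbb{X}_{i}^{N}\|_{2}\le\sqrt{m}L_{f}T$ without comment.
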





From the above result, we obtain an analogous stability result for landscapes and silhouettes, both of which are part of the final output of our method. 

\begin{theorem} \label{thm:analysis_stability_landscape_silhouette}
	Let $D_{X}, D_{f}$ denote persistence diagrams from Proposition \ref{prop:analysis_stability_persistent_homology}. Let $\lambda_{k,X}$, $\lambda_{k,f}$ be the landscape functions from $D_{X}$ and $D_{f}$. Then 
	\begin{equation}
	\Vert \lambda_{k,X} - \lambda_{k,f} \Vert_\infty  \leq \sqrt{m}\left\Vert \epsilon\right\Vert _{\infty}+\frac{2m^{\frac{3}{2}}L_{f}T\left\Vert \epsilon\right\Vert _{\infty}(2L_{f}T+\left\Vert \epsilon\right\Vert _{\infty})}{\lambda_{\tilde{l}}}+\frac{\sqrt{m}L_{f}T}{N}.\label{eq:analysis_stability_landscape}
	\end{equation}
	Also, let $\phi^{(p)}_{X}$, $\phi^{(p)}_{f}$ be the silhouette functions from $D_{X}$ and $D_{f}$. Then 
	\begin{equation}
	\Vert \phi^{(p)}_{X} - \phi^{(p)}_{f} \Vert_\infty  \leq \sqrt{m}\left\Vert \epsilon\right\Vert _{\infty}+\frac{2m^{\frac{3}{2}}L_{f}T\left\Vert \epsilon\right\Vert _{\infty}(2L_{f}T+\left\Vert \epsilon\right\Vert _{\infty})}{\lambda_{\tilde{l}}}+\frac{\sqrt{m}L_{f}T}{N}.\label{eq:analysis_stability_silhouette}
	\end{equation}
\end{theorem}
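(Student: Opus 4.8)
The plan is to reduce both inequalities to the bottleneck bound already established in Proposition~\ref{prop:analysis_stability_persistent_homology} by invoking the Lipschitz stability of the landscape and silhouette maps with respect to the bottleneck distance. Concretely, \eqref{eq:analysis_stability_landscape} and \eqref{eq:analysis_stability_silhouette} both follow immediately from
\[
\|\lambda_{k,X} - \lambda_{k,f}\|_\infty \le d_B(D_X, D_f), \qquad \|\phi^{(p)}_X - \phi^{(p)}_f\|_\infty \le d_B(D_X, D_f),
\]
once we note that the right-hand side of \eqref{eq:analysis_stability_persistent_homology} is an upper bound for $d_B(D_X,D_f)$ at every homological dimension $k$. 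The first inequality is exactly Bubenik's landscape stability theorem \cite{bubenik2015statistical}: the map sending a persistence diagram, equipped with the bottleneck metric, to its persistence landscape, equipped with the sup-norm, is $1$-Lipschitz uniformly over the level index $j$. The second is the analogous silhouette stability statement of \cite{chazal2014stochastic, chazal2015subsampling}. So at the level of a citation-based argument the proof is essentially one line; what needs checking is only that the hypotheses of those stability theorems are in force here (the diagrams are finite, since $X^l$ is a finite point cloud and $\mathbb{X}$ is compact, so the relevant Rips filtrations have finite diagrams) and that the bottleneck estimate of Proposition~\ref{prop:analysis_stability_persistent_homology} is dimension-agnostic.

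If instead one wants a self-contained derivation, I would prove the two Lipschitz bounds directly. Fix an optimal bijection $\gamma$ realizing $d_B(D_X, D_f)$, so that $\|p - \gamma(p)\|_\infty \le d_B(D_X,D_f)$ for every point $p$, with points lacking a genuine partner matched to the diagonal. The elementary fact driving everything is that the tent function of \eqref{def:tenting-function} is $1$-Lipschitz jointly in its argument and in the birth--death pair, i.e.\ $\sup_t |\Lambda_p(t) - \Lambda_{p'}(t)| \le \|p - p'\|_\infty$, and $\Lambda_p \equiv 0$ when $p$ lies on the diagonal. For the landscape this yields, pointwise in $t$, that the multiset $\{\Lambda_p(t)\}_{p \in D_X}$ and the multiset $\{\Lambda_{\gamma(p)}(t)\}_{p\in D_X}$ have $j$-th largest values differing by at most $d_B(D_X,D_f)$ for every $j$; taking the supremum over $t$ and over $j$ and combining with Proposition~\ref{prop:analysis_stability_persistent_homology} gives \eqref{eq:analysis_stability_landscape}.

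The silhouette case is the one I expect to be the real obstacle, because of the normalization: $\phi^{(p)}$ is a persistence-weighted average $\sum_j w_j \Lambda_j$ with $w_j \propto |d_j - b_j|^p$, and one must control simultaneously the perturbation of the tent functions, the perturbation of the weights, and the fact that the two diagrams may have different cardinalities (surplus points being absorbed into the diagonal, where they carry zero persistence and hence zero weight — which is precisely what makes the final bound uniform). I would write the difference $\phi^{(p)}_X - \phi^{(p)}_f$ as a telescoping sum over matched pairs, bounding the numerator change via $|\Lambda_p - \Lambda_{\gamma(p)}| \le d_B(D_X,D_f)$ together with $\Lambda_p \le |d-b|/2$, and the denominator change via a Lipschitz estimate for $\bigl||d-b|^p - |d'-b'|^p\bigr|$ valid for the range of $p$ used in practice. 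This bookkeeping is where the work lies; modulo it, combining the resulting bound $\|\phi^{(p)}_X - \phi^{(p)}_f\|_\infty \le d_B(D_X,D_f)$ with Proposition~\ref{prop:analysis_stability_persistent_homology} delivers \eqref{eq:analysis_stability_silhouette} and completes the proof.
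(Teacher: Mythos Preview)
Your proposal is correct and matches the paper's approach exactly: the paper's proof is precisely your ``citation-based argument'' --- it invokes Bubenik's landscape stability (Theorem A.1 in \cite{bubenik2015statistical}) for \eqref{eq:analysis_stability_landscape}, then for \eqref{eq:analysis_stability_silhouette} simply asserts that ``since the silhouette functions are weighted sums of $\Lambda_{p}$'' the same bottleneck bound transfers, and chains both through Proposition~\ref{prop:analysis_stability_persistent_homology}. If anything you are more careful than the paper, which does not work through the normalization issue for silhouettes that you flag as the real obstacle; your self-contained derivation via the optimal matching and the $1$-Lipschitz tent-function estimate is a genuine strengthening of what the paper records.
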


The proof can be found in the Appendix \ref{app:proof-stability}. When inspecting the various terms in the upper bounds in \eqref{eq:analysis_stability_persistent_homology}, \eqref{eq:analysis_stability_landscape}, \eqref{eq:analysis_stability_silhouette}, we see that the first and second terms are stem from the noise $\epsilon$ and the PCA process respectively, and are proportional to $\left\Vert \epsilon\right\Vert_{\infty}$ or $\left\Vert \epsilon\right\Vert^2_{\infty}$. The last term is due the facts that the observations are observed only at discrete times, and is inversely proportional to $N$. Hence, when the noise $\epsilon$ is small and the size of our data is large enough, then we are guaranteed to closely approximate the topological information of the signal from our method.

\section{Experiments}
In this section, we demonstrate the effectiveness of our method through experiments. We use synthetic data sampled from different stochastic models and Bitcoin historical data for real-world example.

\textbf{Stochastic model classification.}
We aim to classify types of stochastic model based on an observed sequence of samples. We employ three different types of stochastic model as follows: $\text{ARIMA$_{1,1,2}$: } {x}_t= x_{t-1}  +  \phi_1\Delta_{x_{t-1}} - \theta_1e_{t-1} - \theta_2e_{t-2}$, $\text{Composite Sinusoidal: } asin(x_t)sin(bx_t)+c\cos(x_t+d)$ where $a,b,c,d \sim Uniform[1.45,1.55]$, $\text{Ornstein-Uhlenbeck: }  dx_{t}=\theta (\mu -x_{t})dt+\sigma dW_{t}$, where $W_{t}$ denotes the Wiener process. We use $(\phi_1, \theta_1, \theta_2) = (0.4, 0.2, 0.1)$,  $(\theta,\mu,\sigma)=(-0.5,0,0.5)$. We generate 1,000 samples each with $N=250$, and then extract topological features using Algorithm \ref{algorithm:TDA-featurization} with $m=25, \tau=5, d_{max}=1$, using 1st-order landscapes up to the 2nd order homology ($n_{dim}=2$). We use PCA so that $X^l \in \mathbb{R}^3$. Inspired by \cite{DBLP2018}, we feed each input vector into fully-connected two-layer perceptron (2LP) with 100 hidden units, and two-layer convolutional neural network (CNN) with a $1\times10$ filter and 100 hidden units. For each method, we classify type of the original stochastic model in which the sampled sequence originates, with and without TDA features. To compute the persistence diagram we use the R package \texttt{TDA}  \cite{fasy2014introduction}.

The result in Table \ref{tbl:sim-stochastic} shows the proposed method lowers error rates up to roughly $4\% \sim7 \%$ even in this simplified setting. One possible reason why the TDA features can provide additional explanatory power is illustrated in Figure \ref{fig:sim-stochastic} where we present TDA features of 9 sample sequences (3 for each model) computed by Algorithm \ref{algorithm:TDA-featurization}; compared to others, samples from the composite sinusoidal model appear to exhibit strong $H_1$ features with substantial persistence.

\begin{minipage}[t]{\textwidth}
	\begin{minipage}[b]{0.72\textwidth}
		\centering
		\includegraphics[width=0.99\columnwidth]{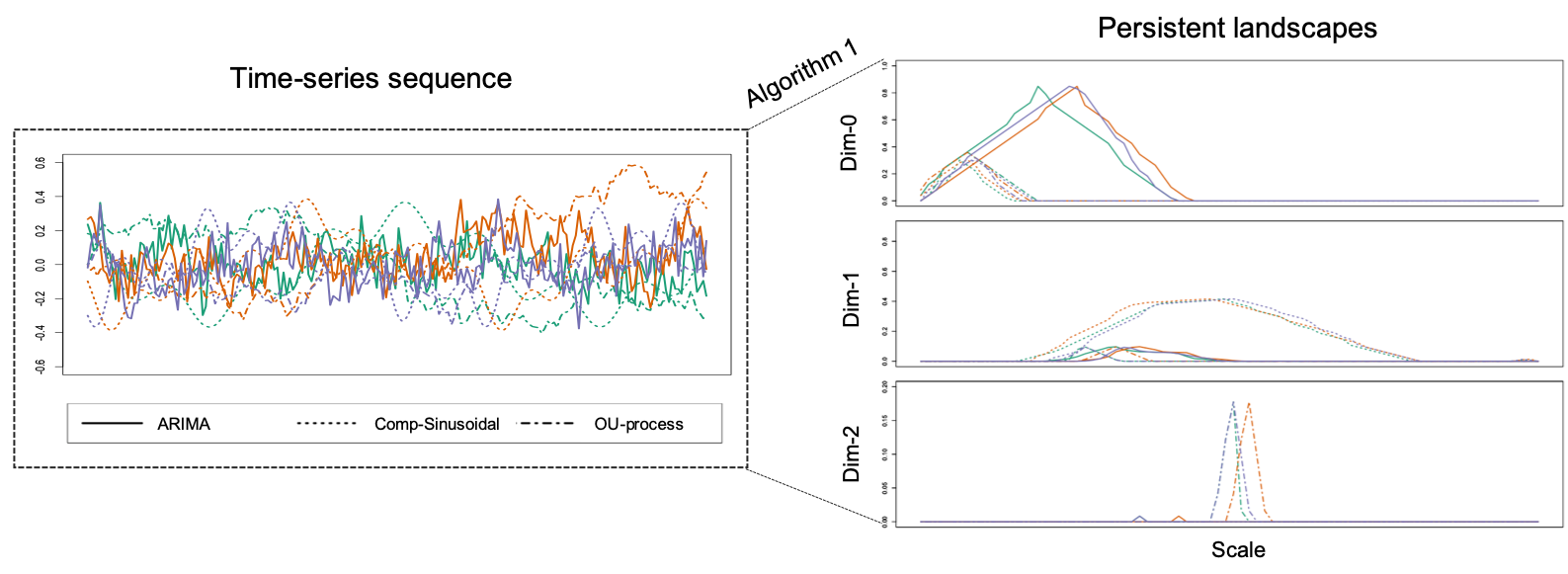}
		\captionof{figure}{TDA features for sampled stochastic sequences.}
		\label{fig:sim-stochastic}
	\end{minipage}
	\hfill
	\begin{minipage}[b]{0.27\textwidth}
		\centering
		\begin{tabular}{cc} \hline
			Method & Error rate \\ \hline
			\midrule
			2LP & 48.3\% \\
			CNN & 44.9\% \\
			TDA+2LP & 41.2\% \\
			TDA+CNN & 40.5\% \\ \hline
		\end{tabular}
		\captionof{table}{Error rates w/ and w/o TDA features.}
		\label{tbl:sim-stochastic}
	\end{minipage}
\end{minipage}

\textbf{Bitcoin price dataset.}
In this experiment, we seek to accurately identify and predict time patterns from the time series of Bitcoin prices, which are known to be very volatile and noisy \cite{segerestedt2018accuracy}. We prepare Bitcoin price data $\{y_t\}_t$ during the period from 01/01/2016 to 07/31/2018, at three different sampling rates - daily (D), hourly (H), and minute-by-minute (M) basis. We use the return over a single period $r_t= (p_{t}-p_{t-1})/p_{t-1}$ based on two consecutive closing prices $p_t$, $p_{t-1}$.  We carry out two types of tasks. The first one is to predict the next return, i.e. to predict $y_{t+1}$ using all the information up to $t$ where the prediction error is measure by root-mean-square error (RMSE). The second task is that of  price pattern classification over the next $k$ time points $[t,t+k]$ where we use four types of pattern as follows. $(P1)$: price movement in the next (single) time point. $(P2), (P3)$: determine if a rare (defined by $(1-\alpha)$-quantile) jump (P2) or drop (P3) occurs in $[t,t+k]$. $(P4)$: slope of moving average price movement on average in $[t,t+k]$.(P1), (P4) have three classes (\{up,down,neutral\} regimes), and (P2), (P3) have two classes (exist or not), whose threshold parameter is to be determined to have almost the same number of samples in each class. For simulation we use $k=6$.

We employ six different prediction models: non-parametric superlearner ensemble (NP) \cite{van2007super} with and without TDA features, 2-layer CNN with and without TDA features as previously, and traditional ARIMA and ARIMAX. When we do not use TDA features, we simply feed the time series sequence itself as an input. Following \cite{campbell1993trading}, we use trading volume as an additional covariate and compare the result with univariate case. We use grid search to find out the best topological parameters $\tau, m, d_{max}$, and set $n_{dim}=2$ and $X^{l} \in \mathbb{R}^3$. The results for the point prediction and the pattern classification are presented in Figure \ref{fig:sim-bitcoin-rmse} and Figure \ref{fig:sim-bitcoin-pattern} respectively. In Figure \ref{fig:sim-bitcoin-rmse}, Uni/S, Co/S on the x-axis mean with and without covariate at sampling rate $S$, where $S \in \{D,H,M\}$. Likewise in Figure \ref{fig:sim-bitcoin-pattern}, $n/S$ on the x-axis means classification for the pattern $P_n$ at sampling rate $S$. As we can see, utilizing TDA features has increased both the prediction and classification accuracy. In our experiments, CNN with TDA features delivers the best performance. We refer to Section \ref{app:sim-results} in the Appendix for more detailed information about the experiment.

\begin{minipage}[t]{\textwidth}
	\begin{minipage}[b]{0.33\textwidth}
		\centering
		\includegraphics[width=0.99\columnwidth]{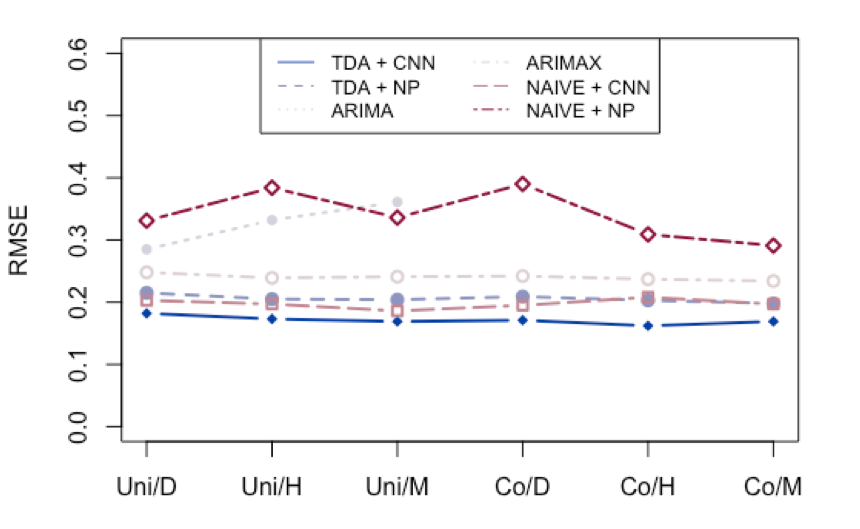}
		\captionof{figure}{RMSE for $\widehat{y}_{t+1}$}
		\label{fig:sim-bitcoin-rmse}
	\end{minipage}
	\hfill
	\begin{minipage}[b]{0.66\textwidth}
		\centering
		\includegraphics[width=0.99\columnwidth]{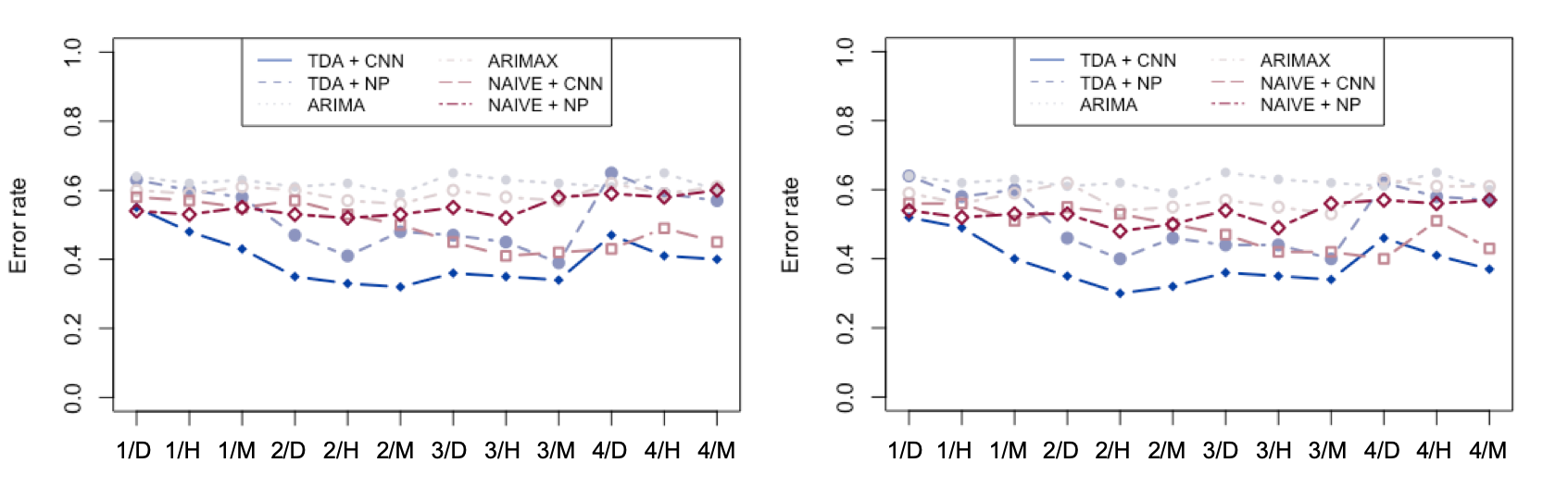}
		\captionof{figure}{Error rates without (\textit{left}) and with (\textit{right}) the covariate.}
		\label{fig:sim-bitcoin-pattern}
	\end{minipage}
\end{minipage}

\section{Conclusion}
In this study, we have proposed a novel time series featurization method by leveraging tools in TDA. 
To the best of our knowledge, our algorithm is the first systematic approach to successfully harness topological features of the attractor of the underlying dynamical system for arbitrary temporal data. Unlike traditional time series models, we do not impose any time-dependent structural assumptions . Importantly, we showed how the proposed method can effectively access the significant topological patterns while being robust to sampling noise by providing theoretical results.  
Based on the empirical results, we argue that our proposed method can be potentially very effective in identifying recurring patterns in granular time series data that are unknown but relevant to the target signal of interest.

\section{Acknowledgment}
We are grateful to Christopher Brookins, founder-CEO of Pugilist Ventures, for his assistance with the Bitcoin data acquisition and insightful comments.

\clearpage
\newpage

\bibliography{reference}

\clearpage
\newpage

\appendix

\section*{\centerline{APPENDIX}}

\section{Supplementary Materials for the Bitcoin Experiment}
\label{app:sim-results}

\subsection{Simulation Results}
\begin{table}[!ht] 
	\vspace{-.29cm}
	\renewcommand\thetable{1}
	\small 
	\begin{center}
		\begin{tabular}{l lll lll}
			\toprule
			\multirow{2}{*}{\quad \ \textbf{Method}} &\multicolumn{3}{l}{\quad Univariate} &\multicolumn{3}{l}{\ With Covariate} \\ 
			& \textbf{D} & \textbf{H} & \textbf{M}  & \textbf{D} & \textbf{H} & \textbf{M}\\
			\midrule
			TDA + CNN &  18.2 & 17.3 & 16.9 & 17.1 & 16.2 & 16.9 \\
			TDA + NP &  21.5 & 20.5 & 20.4 & 20.9 & 20.3 & 19.8 \\
			ARIMA &  28.5 & 33.2 & 36.1 &  - & -  &  - \\
			ARIMAX &  24.8 & 23.9 & 24.1 & 24.2 & 23.7 & 23.4 \\
			NAIVE + CNN & 20.3 & 19.7 & 18.6 & 19.5 & 20.8 &19.7 \\
			NAIVE + NP &  33.1 & 38.4 & 33.6 & 39.0 & 30.9 & 29.1 \\
			\bottomrule
	\end{tabular}\end{center}
	\normalsize
	\caption*{Table A1. Normalized RMSE ($\times 10^{-2}$) of the proposed and baseline methods}
	\label{result:prediction}
\end{table}

\begin{table}[!ht] 
	\vspace{-.245cm}
	\renewcommand\thetable{2}
	\small 
	\begin{center}
		\begin{tabular}{lcccccccccccc}
			\toprule
			\multirow{2}{*}{\quad \ \textbf{Method}} & \multicolumn{4}{c}{\textbf{Daily}} & \multicolumn{4}{c}{\textbf{Hourly}} & \multicolumn{4}{c}{\textbf{Every minute}}\tabularnewline
			\cmidrule{2-13} 
			& $P_{1}$ & $P_{2}$ & $P_{3}$ & $P_{4}$ & $P_{1}$ & $P_{2}$ & $P_{3}$ & $P_{4}$ & $P_{1}$ & $P_{2}$ & $P_{3}$ & $P_{4}$\tabularnewline
			\midrule
			\midrule
			\multirow{2}{*}{TDA + CNN} & 0.55&0.48&0.43&0.35&0.33&0.32&0.36&0.35&0.34&0.47&0.41&0.40 \tabularnewline
			& 0.52&0.49&0.40&0.35&0.30&0.32&0.36&0.35&0.34&0.46&0.41&0.37 \tabularnewline
			\cmidrule{2-13} 
			\multirow{2}{*}{TDA + NP} & 0.63&0.60&0.58&0.47&0.41&0.48&0.47&0.45&0.39&0.65&0.59&0.57 \tabularnewline
			& 0.64&0.58&0.60&0.46&0.40&0.46&0.44&0.44&0.40&0.62&0.58&0.57 \tabularnewline
			\cmidrule{2-13} 
			\multirow{2}{*}{ARIMA} & 0.64&0.62&0.63&0.61&0.62&0.59&0.65&0.63&0.62&0.61&0.65&0.60 \tabularnewline
			& -  & -  & -  & -  & -  & -  & -  & -  & -  & -  & -  & - \tabularnewline
			\cmidrule{2-13} 
			\multirow{2}{*}{ARIMAX} & 0.60&0.59&0.61&0.60&0.57&0.56&0.60&0.58&0.57&0.62&0.59&0.61 \tabularnewline
			& 0.59&0.56&0.59&0.62&0.54&0.55&0.57&0.55&0.53&0.63&0.61&0.61 \tabularnewline
			\cmidrule{2-13} 
			\multirow{2}{*}{NAIVE + CNN} & 0.58&0.57&0.55&0.57&0.53&0.50&0.45&0.41&0.42&0.43&0.49&0.45 \tabularnewline
			& 0.56&0.56&0.51&0.55&0.53&0.50&0.47&0.42&0.42&0.40&0.51&0.43 \tabularnewline
			\cmidrule{2-13} 
			\multirow{2}{*}{NAIVE + NP} & 0.54&0.53&0.55&0.53&0.52&0.53&0.55&0.52&0.58&0.59&0.58&0.6 \tabularnewline
			& 0.54&0.52&0.53&0.53&0.48&0.50&0.54&0.49&0.56&0.57&0.56&0.57 \tabularnewline
			\bottomrule
		\end{tabular}
	\end{center}
	\normalsize
	\caption*{Table A2. Classification error rate (the proportion of misclassified observations in test set) of the proposed and baseline methods. In each cell the lower number corresponds to the case that we incorporate additional covariate (trading volume).}
	\label{result:classification}
\end{table}

\subsection{Setup Details}

The analysis of cryptocurrency data is challenging. The embryonic nature of the technology, unreliable sentiment tracking, inadequate pricing and forecasting methodologies, and suspected price manipulation have created an incredibly complex and noisy dataset that is difficult to predict. It has very little systematic risk, which implies the cryptocurrency price itself should be the most valuable source of information about its price movement. Moreover, a study of \cite{segerestedt2018accuracy} has intimated that using more granular data would be more beneficial. These data characteristics offer a tremendously fertile ground for TDA to be effective in price prediction for cryptocurrency, rather than other traditional financial assets in which various asset pricing models are already available. As described in the main text, we use daily return of Bitcoin price  during the period from 01/01/2016 to 07/31/2018, at three different sampling rates.


\subsection{Setup}
In this section, we detail our simulation schemes and baseline methods for comparison. For the simulation, we set $y$ as a recurring pattern of our interest over single or multiple time periods. It can be a simple price change in the very next time point (e.g. whether price goes up/down) or more complex time series patterns over multiple time points (e.g. whether a rare price jump/drop occurs in the next 6 hours). We assume that there exists a set of time-series signals $\{f_1, f_2,...\}$ where each signal is highly likely to occur somehow prior to the predetermined pattern $y$ (see Figure \ref{fig:patterns-in-price}). Note that $y$ is known by user yet $f_j$'s are unknown. Moreover in reality, we only observe discrete samples from each $f_j$ with a potentially significant amount of noise. Our purpose is to effectively identify and featurize those $f_j$'s, $j=1,2,...,$ using Algorithm \ref{algorithm:TDA-featurization}, where we can utilize those features to build a model to predict a likelihood of occurrence of the pattern $y$.

\begin{figure}[!h] 
	\centering
	\includegraphics[width=0.8\columnwidth]{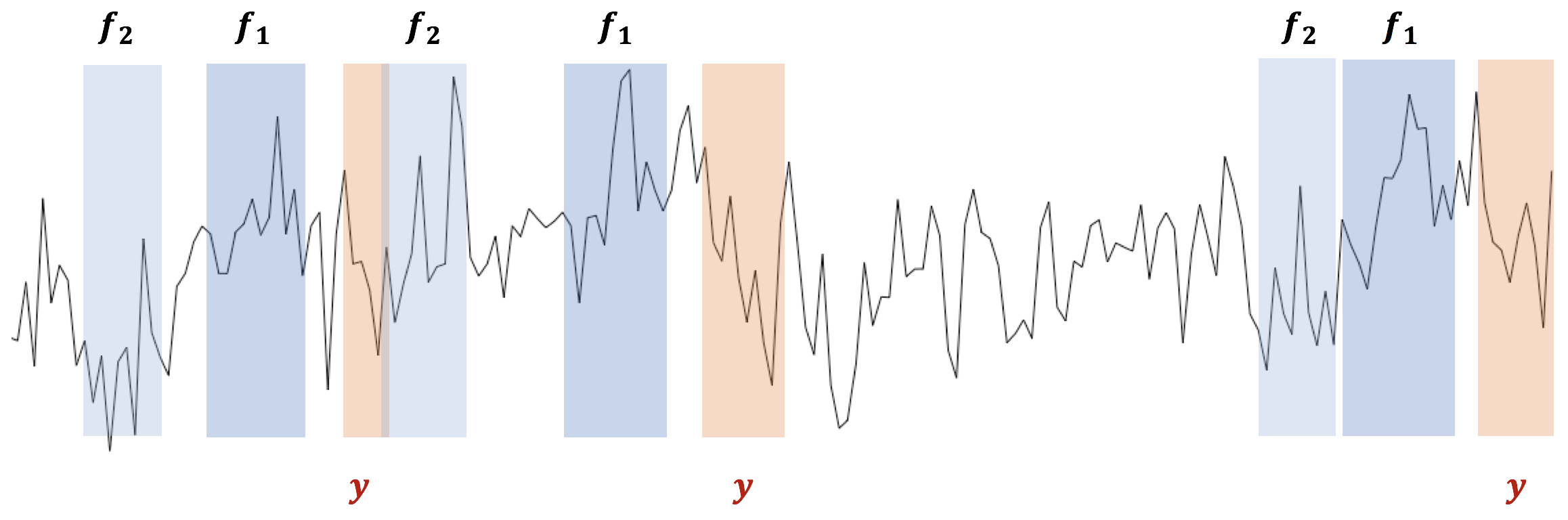}
	\caption{An example of recurring pattern $y$ and time-series signals $f_1, f_2$ that are very similar in shape and highly likely to occur prior to $y$. In reality, we only observe discrete samples from each $f_1, f_2$ with potentially significant amount of noise.  Data are obtained from minute-by-minute observations in the first 3 hours on 07/01/2018.}
	\label{fig:patterns-in-price}
\end{figure}

We let $N_{training}$ and $N_{test}$ denote the number of samples in training and test period respectively. To apply Algorithm \ref{algorithm:TDA-featurization}, we use a sequence of length $N \ll N_{training}$ to generate TDA features within the training period. Then we fit a machine learning model that predicts the predetermined pattern $y$ from the TDA features using samples in the training period, and use the trained model to get predicted $\hat{y}$ from samples in the test period. Then we compute a test error rate and roll over into next training, test window (rolled over by $N_{test}$). Throughout the simulation, we use $N_{training}=336$, $N_{test}=24$, and $N=168$.

As described in the main text, we use four patterns (P1)-(P4), each of which is either 3-class or binary classification problem. Figure \ref{fig:pattern-def-examples} illustrates an example of each pattern. For simulation, we use $k=6$ and $\alpha=0.1$. When we define a regime in ${P1}$, ${P4}$ we use thresholds to have almost the same number of samples in each class.

\begin{figure}[!ht] 
	\centering
	\includegraphics[width=0.8\columnwidth]{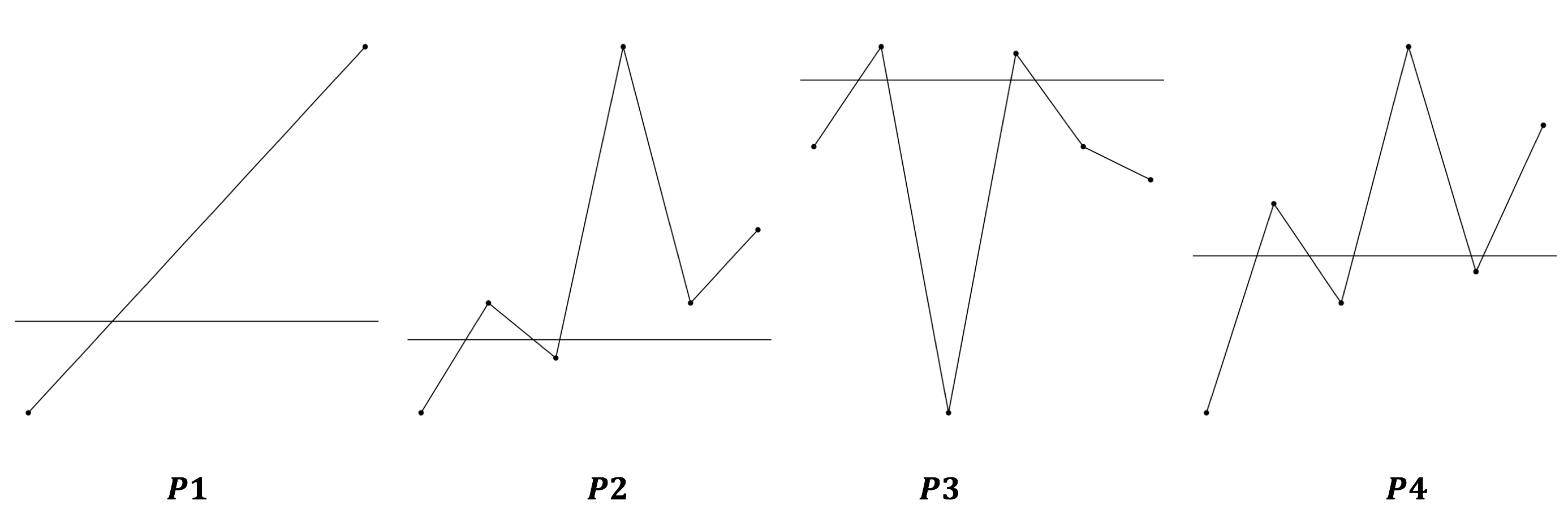}
	\caption{Examples of how the four patterns have their values. In the figure, we classify both $\textbf{P1}$ and $\textbf{P4}$ as \textit{up}, and a rare jump/drop \textit{occurs} in $\textbf{P2}$, $\textbf{P3}$ respectively.}
	\label{fig:pattern-def-examples}
\end{figure}

We construct TDA features via Algorithm \ref{algorithm:TDA-featurization}, where we try various combinations of $(m, \tau, d_{max})$ and report the best rate while we restrict the number of principal components to three ($l=3$). The average proportion of variance explained is roughly 40\%~50\%. For NP model, we use cross-validation-based superleaner ensemble algorithm \cite{van2007super} via the \texttt{SuperLearner} package in R to combine support vector machine, random forest, k-nearest neighbor regression, and multivariate adaptive regression splines. We use default methods (mostly cross-validation) in the package to tune any hyperparameters in the each of the models.


ARIMA and ARIMAX are two widely used time series models, where the latter incorporates an external regressor into ARIMA model. When there is no additional covariate we use the first $N$ coefficients of Fourier transform as the external regressor for ARIMAX. 

\section{A Brief Review of Dynamical Systems and Takens' Embedding Theorem}
\label{app:dynamic-systems}

Dynamical systems are mathematical objects used to model phenomena with states that evolve over time, and widely used in various scientific domains including quantum physics. We describe such dynamical evolutions as a set of state transition rules typically represented by differential equations. A time series can be considered a series of projections of the observed states from such a dynamical system, and therefore reconstructing the equations for the transition rules from the observed time series data is crucial to understand the underlying phenomena \cite{packard1980geometry}. The problem is that it is in general very difficult to fully reconstruct the equations for the transition rules of the dynamical systems from observed time series data without any a priori knowledge \cite{tong1990non, eckmann1992fundamental}. The most widely used approach to bypass this problem is to utilize an attractor. Putting it simply, an attractor is a set of numerical values toward which the given dynamical system eventually evolves over time, passing through various transitions determined by the system, without depending too much on its starting conditions. It is very common to have two time series signals that have the same transition rule but completely different observed waveforms. In contrast, the attractors of the dynamical system constructed from these time series still closely resemble each other. Thus studying the attractors provide a means to model dynamical systems (see, for example, \cite{liu2010chaotic, henry2001nonlinear, basharat2009time, ozaki19852, tong1990non, kantz2004nonlinear} for more details).

Generating an attractor directly from observed data is in theory impossible. This is basically due to the fact that an attractor is comprised of infinitely many points whereas we have only finitely many observed time series data. One well-known alternative is to form the {quasi-attractor} based on Takens' delay embedding theorem \cite{takens1981detecting}. Let $\mathcal{M}_0$ denote the manifold corresponding to the original dynamical system generating the observed time series data. Takens' delay embedding theorem guarantees the existence of a smooth map $\Psi$ such that $\mathcal{M}_0 \rightarrow \mathcal{M}^\prime$, where $\mathcal{M}^\prime$ is an $m$-dimensional Euclidean space and $\mathcal{M}_0$ and $\mathcal{M}^\prime$ are topologically equivalent. \cite{sauer1991embedology} found that this is guaranteed provided that $m > d_0$, where $d_0$ is the box-counting
dimension of the attractor in $\mathcal{M}_0$. \cite{robinson2005topological} generalized Takens' theorem to infinite-dimensional systems that have finite-dimensional attractors.

Hence, once we obtain $\Psi$ we can instead study $\mathcal{M}^\prime$ to analyze the underlying mechanism of given time series. This result has been popularized since it gives the ability to extract crucial information about time series of interest in Euclidean space. This embedding theorem has been proved to be useful to study chaotic or noisy time series (see the listed studies in the main text). In particular, it can be used for state space reconstruction of the original dynamical system so that TDA is applicable to point cloud data (e.g., \cite{umeda2017time,emrani2014persistent, venkataraman2016persistent,truong2017exploration}).

\section{More Background for Topological Data Analysis}
\label{app:tda_details}

This section introduces more detailed background in algebraic topology for topological data analysis that is used in this paper to supplement Section \ref{sec:tda}.

\subsection{Distance between sets on metric spaces}

When topological information of the underlying space is approximated
by the observed points, it is often needed to compare two sets with
respect to their metric structures. Here we present two distances
on metric spaces, Hausdorff distance and Gromov-Hausdorff distance.
We refer to \cite{BuragoBI2001} for more details and other distances.

The \emph{Hausdorff distance} is on sets embedded on the same metric
spaces. This distance measures how two sets are close to each other
in the embedded metric space. When $S\subset\mathfrak{X}$, we denote
by $U_{r}(S)$ the $r$-neighborhood of a set $S$ in a metric space,
i.e. $U_{r}(S)=\bigcup_{x\in S}\mathbb{B}_{\mathfrak{X}}(x,r)$.

\begin{definition}[Hausdorff distance] (\cite[Definition 7.3.1]{BuragoBI2001})
	
	\label{def:distance_hausdorff}
	
	Let $\mathfrak{X}$ be a metric space, and $X,Y\subset\mathfrak{X}$
	be a subset. The Hausdorff distance between $X$ and $Y$, denoted
	by $d_{H}(X,Y)$, is defined as 
	\[
	d_{H}(X,Y)=\inf\{r>0:\,X\subset U_{r}(Y)\text{ and }Y\subset U_{r}(X)\}.
	\]
	
\end{definition}

The \emph{Gromov-Hausdorff distance} measures how two sets are far
from being isometric to each other. To define the distance, we first
define a relation between two sets called \emph{correspondence}.

\begin{definition}
	
	Let $X$ and $Y$ be two sets. A \emph{correspondence} between $X$
	and $Y$ is a set $C\subset X\times Y$ whose projections to both
	$X$ and $Y$ are both surjective, i.e. for every $x\in X$, there
	exists $y\in Y$ such that $(x,y)\in C$, and for every $y\in Y$,
	there exists $x\in X$ with $(x,y)\in C$.
	
\end{definition}

For a correspondence, we define its \emph{distortion} by how the metric
structures of two sets differ by the correspondence.

\begin{definition}
	
	Let $X$ and $Y$ be two metric spaces, and $C$ be a correspondence
	between $X$ and $Y$. The \emph{distortion} of $C$ is defined by
	\[
	dis(C)=\sup\left\{ \left|d_{X}(x,x')-d_{Y}(y,y')\right|:\,(x,y),(x',y')\in C\right\} .
	\]
	
\end{definition}

Now the Gromov-Hausdorff distance is defined as the smallest possible
distortion between two sets.

\begin{definition}[Gromov-Hausdorff distance]
	(\cite[Theorem 7.3.25]{BuragoBI2001})
	
	\label{def:distance_gromov_hausdorff}
	
	Let $X$ and $Y$ be two metric spaces. The \emph{Gromov-Hausdorff
		distance} between $X$ and $Y$, denoted as $d_{GH}(X,Y)$, is defined
	as 
	\[
	d_{GH}(X,Y)=\frac{1}{2}\inf_{C}dis(C),
	\]
	where the infimum is over all correspondences between $X$ and $Y$.
	
\end{definition}

\subsection{Simplicial complex and Nerve Theorem}
\label{sec:simplicial_complex}
A simplicial complex can be seen as a high dimensional generalization
of a graph. Given a set $V$, an \textit{(abstract) simplicial complex}
is a set $K$ of finite subsets of $V$ such that $\alpha\in K$ and
$\beta\subset\alpha$ implies $\beta\in K$. Each set $\alpha\in K$
is called its \textit{simplex}. The \textit{dimension} of a simplex
$\alpha$ is $\dim\alpha=\mathrm{card}\alpha-1$, and the dimension
of the simplicial complex is the maximum dimension of any of its simplices.
Note that a simplicial complex of dimension $1$ is a graph.

When approximating the topology of the underlying space by observed samples, a common choice other than the Rips complex is the \textit{\v{C}ech complex,} defined next, Below, for any $x\in\mathbb{X}$ and $r>0$, we let $\mathbb{B}_{\mathbb{X}}(x,r)$
denote the closed ball centered at $x$ and radius $r>0$. 

\begin{definition}[\v{C}ech complex] \label{def:background_cech} Let
	$\mathcal{X}\subset\mathbb{X}$ be finite and $r>0$.
	The (weighted) \v{C}ech complex is the simplicial complex 
	\begin{equation}
	\textrm{\v{C}ech}^{\mathbb{X}}_{\mathcal{X}}(r):=\{ \sigma\subset\mathcal{X}:\ \cap_{x\in\sigma}\mathbb{B}_{\mathbb{X}}(x,r)\neq\emptyset\} ,\label{eq:background_cech}
	\end{equation}
	The superscript $\mathbb{X}$ will be dropped when understood from the
	context. \end{definition}

Note that the \v{C}ech complex and Rips complex have following interleaving
inclusion relationship 
\begin{equation}
\textrm{\v{C}ech}_{\mathcal{X}_{n}}(r)\subset R_{\mathcal{X}_{n}}(r)\subset\textrm{\v{C}ech}_{\mathcal{X}_{n}}(2r).\label{eq:background_interleaving_cechrips_general}
\end{equation}
In particular, when $\mathbb{X}$
is a Euclidean space, then the constant $2$ can be tightened to $\sqrt{2}$:
\begin{equation}
\textrm{\v{C}ech}_{\mathcal{X}_{n}}(r)\subset R_{\mathcal{X}_{n}}(r)\subset\textrm{\v{C}ech}_{\mathcal{X}_{n}}(\sqrt{2}r).\label{eq:background_interleaving_cechrips_euclidean}
\end{equation}

The topology of the \v{C}ech complex is linked to underlying continuous
spaces via Nerve Theorem. Let $r>0$
and consider the union of balls 
$
\cup_{x\in\mathcal{X}}\mathbb{B}_{\mathbb{X}}(x,r).
$
Then the union of balls
is homotopic equivalent to the \v{C}ech complex by the following Nerve Theorem.

\begin{theorem}[Nerve Theorem] \label{thm:background_nerve}
	
	Let $\mathcal{X}\subset\mathbb{X}$ be a finite set and $r>0$.
	Suppose for any finite subset $\{x_{1},\ldots,x_{k}\}\subset\mathcal{X}$, the
	intersection $\bigcap _{j=1}^{k}\mathbb{B}_{\mathbb{X}}(x_{j},r)$
	is either empty or contractible, then the \v{C}ech complex $\textrm{\v{C}ech}^{\mathbb{X}}_{\mathcal{X}}(r)$
	is homotopic equivalent to the union of balls $\cup_{x\in\mathcal{X}}\mathbb{B}_{\mathbb{X}}(x,r)$.
	
\end{theorem}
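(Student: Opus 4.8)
This is the classical Nerve Theorem, so the plan is to compare the union of balls with the nerve indirectly, through a single intermediate space that projects onto both, and then to show each projection is a homotopy equivalence. The intermediate object is the Mayer--Vietoris blowup complex, equivalently the homotopy colimit of the diagram of intersections; the good-cover hypothesis will be consumed in exactly one of the two projections.

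\textbf{Setup.} Write $U_{i}=\mathbb{B}_{\mathbb{X}}(x_{i},r)$, let $M=\bigcup_{i}U_{i}$ be the union of balls, and let $\mathcal{N}=\textrm{\v{C}ech}^{\mathbb{X}}_{\mathcal{X}}(r)$ be the nerve with geometric realization $|\mathcal{N}|$. For a simplex $\sigma\in\mathcal{N}$ set $U_{\sigma}=\bigcap_{i\in\sigma}U_{i}$ and let $|\sigma|\subset|\mathcal{N}|$ denote its closed geometric realization; by Definition \ref{def:background_cech}, $U_{\sigma}\neq\emptyset$ precisely when $\sigma\in\mathcal{N}$, and by the hypothesis each such $U_{\sigma}$ is contractible. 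I would then define
\[
B=\bigcup_{\sigma\in\mathcal{N}}U_{\sigma}\times|\sigma|\ \subset\ M\times|\mathcal{N}|,
\]
and take $\pi:B\to M$ and $\rho:B\to|\mathcal{N}|$ to be the restrictions of the two coordinate projections.

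The first substantive step is to identify the two families of fibers. Over $x\in M$, writing $I(x)=\{i:\,x\in U_{i}\}$, the fiber $\pi^{-1}(x)$ is the closed geometric simplex on the vertex set $I(x)$, hence nonempty and contractible; over a point $t$ carried by the simplex $\sigma$, the fiber $\rho^{-1}(t)$ equals $U_{\sigma}$. The heart of the argument is then the \emph{projection lemma}: a map which, over each cell of a CW or stratified base, restricts to a projection with contractible fibers is a homotopy equivalence. Applying it to $\pi$ (whose fibers are simplices, hence always contractible) shows $\pi$ is a homotopy equivalence and uses only that the $U_{i}$ cover $M$; applying it to $\rho$ shows $\rho$ is a homotopy equivalence and is precisely the place where the good-cover assumption that each $U_{\sigma}$ is contractible is needed. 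Composing the two then yields $M\simeq B\simeq|\mathcal{N}|$, which is the claimed equivalence.

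The main obstacle is making the projection lemma rigorous. I would build $M$ and $|\mathcal{N}|$ up over their strata and skeleta and glue the local fiberwise homotopy equivalences one cell at a time, invoking Dold's theorem on fibered homotopy equivalences (equivalently, the homotopy invariance and cofinality properties of homotopy colimits) at each stage; this requires verifying that the relevant attaching maps are cofibrations so the gluing lemma for homotopy equivalences applies. A secondary technical point is that the argument for $\pi$ is cleanest for an open cover carrying a subordinate partition of unity, whereas the $U_{i}$ are closed balls, so I would first replace them by slightly enlarged open balls realizing the same nerve, or use that in the intended setting $\mathbb{X}=\mathbb{R}^{d}$ the balls are convex and hence absolute neighborhood retracts. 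As a fallback that avoids homotopy-colimit machinery, one can instead run a Mayer--Vietoris spectral sequence to show that both $\pi$ and $\rho$ induce isomorphisms on homology, and then upgrade to a genuine homotopy equivalence via van Kampen's theorem and Whitehead's theorem.
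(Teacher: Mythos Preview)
The paper does not actually prove this theorem: it is quoted in Appendix~\ref{app:tda_details} as a classical background result (the Nerve Theorem of Borsuk, later refined by Weil and others), with no argument given. So there is no ``paper's own proof'' to compare against.

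That said, your outline is the standard and correct route. The Mayer--Vietoris blowup $B\subset M\times|\mathcal{N}|$ with its two projections is exactly the intermediary used in Hatcher's proof (Corollary~4G.3), and your identification of the fibers is right: $\pi^{-1}(x)$ is the closed simplex on $I(x)$, and for $t$ in the relative interior of $|\tau|$ one has $\rho^{-1}(t)=U_{\tau}$. Your placement of the good-cover hypothesis---needed only for $\rho$, not for $\pi$---is also correct. The one point worth tightening is your treatment of closed versus open balls: the cleanest version of the projection lemma (via a partition of unity subordinate to the cover) wants open sets, and your proposed fix of slightly enlarging the balls works provided you check that the nerve does not change, which it need not for arbitrary $r$; the safer route in $\mathbb{R}^{d}$ is to invoke convexity directly, since convex sets and their intersections are contractible and the standard CW-gluing argument goes through for closed convex pieces without appealing to a partition of unity.
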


\subsection{Stability theorems}

Stability theorems have been established for various cases.

First, we consider the case when  the filtration $\mathcal{F}$
is generated from the sub-level sets or the super-level sets of a function.
Let $f,g:\mathbb{X}\to\mathbb{R}$ be two functions,
and let $PH_{*}(f)$ and $PH_{*}(g)$ be the corresponding persistent
homologies of the sublevel set filtrations $\{f\leq L\}_{L\in\mathbb{R}}$
and $\{g\leq L\}_{L\in\mathbb{R}}$.

We will impose a standard regularity condition for the functions $f$ and $g$, which is \emph{tameness}.

\begin{definition}[tameness] \label{def:stability_tame} (\cite[Section 3.8]{ChazalSGO2016})
	Let $f:\mathbb{X}\to\mathbb{R}$. Then $f$ is \emph{tame} if the image $im(\imath_{L}^{L'})$ of the homomorphism $\imath_{L}^{L'}:H_{k}(f^{-1}(-\infty,L])\to H_{k}(f^{-1}(-\infty,L'])$ induced from the inclusion $\imath_{L}^{L'}:f^{-1}(-\infty,L] \to f^{-1}(-\infty,L']$ is of finite rank for all $k\in\mathbb{N}\cup\{0\}$ and $L<L'$.
\end{definition}

When two functions $f$ and $g$ satisfy the tameness condition, their bottleneck distance is bounded by their $\ell_{\infty}$ distance, an important and useful fact known as the stability theorem.

\begin{theorem}[Stability theorem for sublevel or superlevel sets of a function] \cite{CohenEH2007, ChazalCGGO2009, ChazalSGO2016}
	\label{thm:stability_function} For two tame functions
	$f,g:\mathbb{X}\to\mathbb{R}$, 
	\[
	d_{B}(PH_{k}(f),PH_{k}(g))\leq\|f-g\|_{\infty}.
	\]
\end{theorem}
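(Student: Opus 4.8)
The plan is to observe that the right-hand sides of \eqref{eq:analysis_stability_landscape} and \eqref{eq:analysis_stability_silhouette} are \emph{identical} to the bottleneck bound already furnished by Proposition \ref{prop:analysis_stability_persistent_homology}. Hence the whole task reduces to showing that the maps sending a persistence diagram to its landscape and to its silhouette are each $1$-Lipschitz with respect to the bottleneck distance; chaining such an inequality with Proposition \ref{prop:analysis_stability_persistent_homology} then produces both displays simultaneously. Concretely, I would first establish the two auxiliary inequalities
\begin{equation}
\Vert \lambda_{k,X} - \lambda_{k,f} \Vert_\infty \leq d_B(D_X, D_f), \qquad \Vert \phi^{(p)}_X - \phi^{(p)}_f \Vert_\infty \leq d_B(D_X, D_f), \label{eq:lip-aux}
\end{equation}
and then substitute into them the bound on $d_B(D_X, D_f)$ guaranteed by Proposition \ref{prop:analysis_stability_persistent_homology}.

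For the landscape inequality I would argue pointwise in $t$. Augment both $D_X$ and $D_f$ with copies of the diagonal (each contributing a tent that is identically zero), and fix a bijection $\gamma$ between the augmented diagrams that nearly attains the bottleneck infimum, so that $\sup_p \Vert p - \gamma(p)\Vert_\infty \leq d_B(D_X,D_f) + \eta$ for arbitrary $\eta>0$. The elementary fact driving the estimate is that each tent function is $1$-Lipschitz in its birth--death point: for off-diagonal $p,p'$ one has $|\Lambda_p(t) - \Lambda_{p'}(t)| \leq \Vert p - p'\Vert_\infty$, while for a point $p=(b,d)$ matched to its diagonal projection the tent height obeys $\Lambda_p(t) \leq \tfrac{1}{2}(d-b) = \Vert p - \gamma(p)\Vert_\infty$. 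Thus the two families $\{\Lambda_p(t)\}_p$ and $\{\Lambda_{\gamma(p)}(t)\}_p$, indexed by the same matched set, differ pointwise by at most $\sup_p \Vert p - \gamma(p)\Vert_\infty$. Since $\lambda(\cdot\,,t)$ is formed by the $jmax$ (the $j$-th largest value) of such a family, and the $j$-th order statistic of a collection of reals is itself $1$-Lipschitz under $\ell_\infty$ perturbations of the entries, the two landscapes differ pointwise by at most that supremum. Letting $\eta \to 0$ and taking the supremum over $t$ yields the first inequality in \eqref{eq:lip-aux}.

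The silhouette inequality is the more delicate step and constitutes the main obstacle. Unlike the landscape, the silhouette $\phi^{(p)}$ in \eqref{def:silhouette} is a \emph{persistence-weighted average} of tent functions, so the weights $|d_j - b_j|^p$ differ between the two diagrams and the clean order-statistic argument no longer applies: one cannot merely match tents one-to-one and bound each difference, because the normalizing denominators also change. Here I would invoke the silhouette stability result of \cite{chazal2014stochastic, chazal2015subsampling}, which establishes precisely $\Vert \phi^{(p)}_X - \phi^{(p)}_f \Vert_\infty \leq d_B(D_X, D_f)$; its proof controls the weighted average by separating the matched pairs into those matched to off-diagonal points (where both tent height and weight vary by a controlled amount across the optimal matching) and those matched to the diagonal (whose small persistence forces the weights $|d_j - b_j|^p$ to be correspondingly small, so their contribution to the normalized average stays bounded by the bottleneck displacement).

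Finally, combining \eqref{eq:lip-aux} with the estimate of Proposition \ref{prop:analysis_stability_persistent_homology} on $d_B(D_X, D_f)$ inserts the common right-hand side into both inequalities, which is exactly \eqref{eq:analysis_stability_landscape} and \eqref{eq:analysis_stability_silhouette}. The only genuine content beyond Proposition \ref{prop:analysis_stability_persistent_homology} is therefore the pair of $1$-Lipschitz stability statements in \eqref{eq:lip-aux}, of which the landscape case is a short self-contained computation and the silhouette case is dispatched by the cited weighted-stability theorem.
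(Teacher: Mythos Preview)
Your proposal does not address the stated theorem at all. The statement you were asked to prove is Theorem~\ref{thm:stability_function}, the classical sublevel-set stability inequality $d_B(PH_k(f),PH_k(g))\leq\|f-g\|_\infty$ for tame functions $f,g:\mathbb{X}\to\mathbb{R}$. Your write-up instead argues for Theorem~\ref{thm:analysis_stability_landscape_silhouette}: you set out to bound $\|\lambda_{k,X}-\lambda_{k,f}\|_\infty$ and $\|\phi^{(p)}_X-\phi^{(p)}_f\|_\infty$ by chaining Lipschitz properties of the landscape and silhouette maps with Proposition~\ref{prop:analysis_stability_persistent_homology}. None of this touches the statement in question, which concerns the bottleneck distance between persistence diagrams of two sublevel-set filtrations and has nothing to do with landscapes, silhouettes, PCA, or the Takens embedding.

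In the paper, Theorem~\ref{thm:stability_function} is a cited background result (from \cite{CohenEH2007,ChazalCGGO2009,ChazalSGO2016}) and is stated without proof; there is no argument for you to compare against. As a separate remark, had the target actually been Theorem~\ref{thm:analysis_stability_landscape_silhouette}, your approach would coincide with the paper's: it too derives both bounds directly from Proposition~\ref{prop:analysis_stability_persistent_homology} by invoking the $1$-Lipschitz property of landscapes (citing \cite{bubenik2015statistical}) and the analogous fact for silhouettes.
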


Second, we consider the case for Rips filtration. Let $X$ and $Y$ be two metric spaces, and let $PH_{k}(R_{X})$ and $PH_{k}(R_{Y})$ be the corresponding persistent homologies of the Rips complex filtrations $\{R_{X}(r)\}_{r\in\mathbb{R}}$ and $\{R_{Y}(r)\}_{r\in\mathbb{R}}$.

We say that the metric space is totally bounded if it can be arbitrarily approximated by a finite set of points. For example, a bounded subset of Euclidean space is totally bounded.

\begin{definition}
	A metric space $X$ is \emph{totally bounded} if for any $\epsilon>0$, there exists a finite set of points $x_{1},\ldots,x_{n}\in X$ that $\epsilon$-approximates $X$, i.e. for all $x\in X$, there exists $x_{i}$ such that $d(x,x_{i})<\epsilon$.
\end{definition}

Now, when two metric spaces $X$ and $Y$ are totally bounded, then their bottleneck distance between corresponding Rips persistence homologies is bounded by their Gromov-Hausdorff distance, as the following stability theorem.

\begin{theorem}[Stability theorem for Rips complex] \cite[Theorem 5.2]{ChazalSO2012}
	\label{thm:stability_rips} Let $X$ and $Y$ be two totally bounded metric spaces. Then, 
	\[
	d_{B}(PH_{k}(R_{X}),PH_{k}(R_{Y}))\leq d_{GH}(X,Y).
	\]
\end{theorem}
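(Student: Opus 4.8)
The plan is to deduce the bound from the algebraic stability of persistence modules, by converting a near-optimal correspondence between $X$ and $Y$ into an interleaving of the two Rips persistence modules $PH_k(R_X)$ and $PH_k(R_Y)$. Concretely, I will rely on the fact (the algebraic stability theorem, i.e.\ the ``easy half'' of the isometry theorem for $q$-tame modules) that if two persistence modules are $\delta$-interleaved, then the bottleneck distance between their diagrams is at most $\delta$. Since $X$ and $Y$ are totally bounded, their Rips persistence modules are $q$-tame, so the diagrams and $d_B$ are well defined and this machinery applies. It therefore suffices to produce, for every $\epsilon>0$, a $\delta$-interleaving with $\delta < d_{GH}(X,Y)+\epsilon$.

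First I would unwind the definition of $d_{GH}$: for any $\epsilon>0$ there is a correspondence $C\subset X\times Y$ with $dis(C)=:\eta < 2d_{GH}(X,Y)+\epsilon$. Using surjectivity of the two projections of $C$, choose (arbitrarily) maps $\phi:X\to Y$ and $\psi:Y\to X$ with $(x,\phi(x))\in C$ and $(\psi(y),y)\in C$ for all $x,y$. The key scale computation is that $\phi$ is simplicial as a map $R_X(r)\to R_Y(r+\eta/2)$ for every $r$: if $d_X(x,x')<2r$, then the distortion bound gives $d_Y(\phi(x),\phi(x'))\le d_X(x,x')+\eta < 2r+\eta = 2(r+\eta/2)$. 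The factor $2$ in the paper's convention $R(r)=\{d<2r\}$ is exactly what makes the shift come out to $\eta/2$ rather than $\eta$, so the bound lands on $d_{GH}$ with no extra constant. These maps commute with the inclusions $R_X(r)\hookrightarrow R_X(r')$, so $\phi$ (and symmetrically $\psi$) induces a morphism of filtered complexes shifted by $\eta/2$, hence a shifted morphism of persistence modules.

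The heart of the argument is to show the two compositions are the shifted inclusions at the level of homology, for which I would invoke simplicial contiguity: two simplicial maps $f,g$ for which $f(\sigma)\cup g(\sigma)$ spans a simplex for every simplex $\sigma$ induce the same homology map. Taking $\sigma=\{x_0,\dots,x_p\}\in R_X(r)$, I must check that $\sigma\cup(\psi\circ\phi)(\sigma)$ is a simplex of $R_X(r+\eta)$, i.e.\ that all pairwise distances are $<2(r+\eta)$. The only nontrivial pairs have the form $d_X(x_i,\psi\phi(x_j))$; applying the distortion bound to $(x_i,\phi(x_i)),(\psi\phi(x_j),\phi(x_j))\in C$ and then to $(x_i,\phi(x_i)),(x_j,\phi(x_j))\in C$ yields $d_X(x_i,\psi\phi(x_j))\le d_Y(\phi(x_i),\phi(x_j))+\eta \le d_X(x_i,x_j)+2\eta < 2r+2\eta = 2(r+\eta)$, as needed. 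Hence $\psi\circ\phi$ is contiguous to the inclusion $R_X(r)\hookrightarrow R_X(r+\eta)$ and $\phi\circ\psi$ to $R_Y(r)\hookrightarrow R_Y(r+\eta)$, so $PH_k(R_X)$ and $PH_k(R_Y)$ are $\eta/2$-interleaved. Algebraic stability then gives $d_B(PH_k(R_X),PH_k(R_Y))\le \eta/2 < d_{GH}(X,Y)+\epsilon/2$, and letting $\epsilon\downarrow 0$ finishes the proof.

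I expect the main obstacles to be two bookkeeping-heavy points rather than any single deep idea: getting the scale shifts exactly right (the $\eta/2$ accounting under the convention $R(r)=\{d<2r\}$, together with the observation that $\phi,\psi$ need only be vertex maps and not continuous), and justifying the infinite case. For the latter, total boundedness is essential: it guarantees $q$-tameness of the Rips modules so that the diagrams exist and the algebraic stability theorem is applicable even though $X,Y$ need not be finite and the infimum defining $d_{GH}$ need not be attained; the $\epsilon$-slack in the definitions of $d_{GH}$ and of total boundedness lets me restrict attention to near-optimal correspondences without loss.
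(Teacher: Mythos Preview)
The paper does not prove this theorem; it is quoted as background from \cite[Theorem~5.2]{ChazalSO2012} and used as a black box in the proof of Proposition~\ref{prop:analysis_stability_persistent_homology}. Your proposal is correct and is essentially the standard argument from that reference: turn a near-optimal correspondence into simplicial maps that shift the Rips scale by $\eta/2$, verify via contiguity that the round-trips agree with the inclusions, obtain an $\eta/2$-interleaving, and invoke algebraic stability for $q$-tame modules (total boundedness supplying $q$-tameness). There is nothing to compare against in the present paper.
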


\section{Two examples in which PCA reduces homological noises}
\label{sec:appendix-PCA-example}

This section presents two examples for Section \ref{sec:pca-denoise} in which PCA reduces homological noises. Suppose the PCA uses the first $l$ principal components.

Figure \ref{fig:pca_homology_high} shows how the topological noise of higher dimension is reduced. The topological noise is topologically sphere of dimension $l$ or higher. When projected to a $l$-dimensional linear space, this sphere reduces to a disk of dimension $l$.  Hence, this topological noise is eliminated after the PCA.

\begin{figure} [!h]
	\begin{center}
		\begin{subfigure}{0.40\linewidth}
			\includegraphics[width=\linewidth]{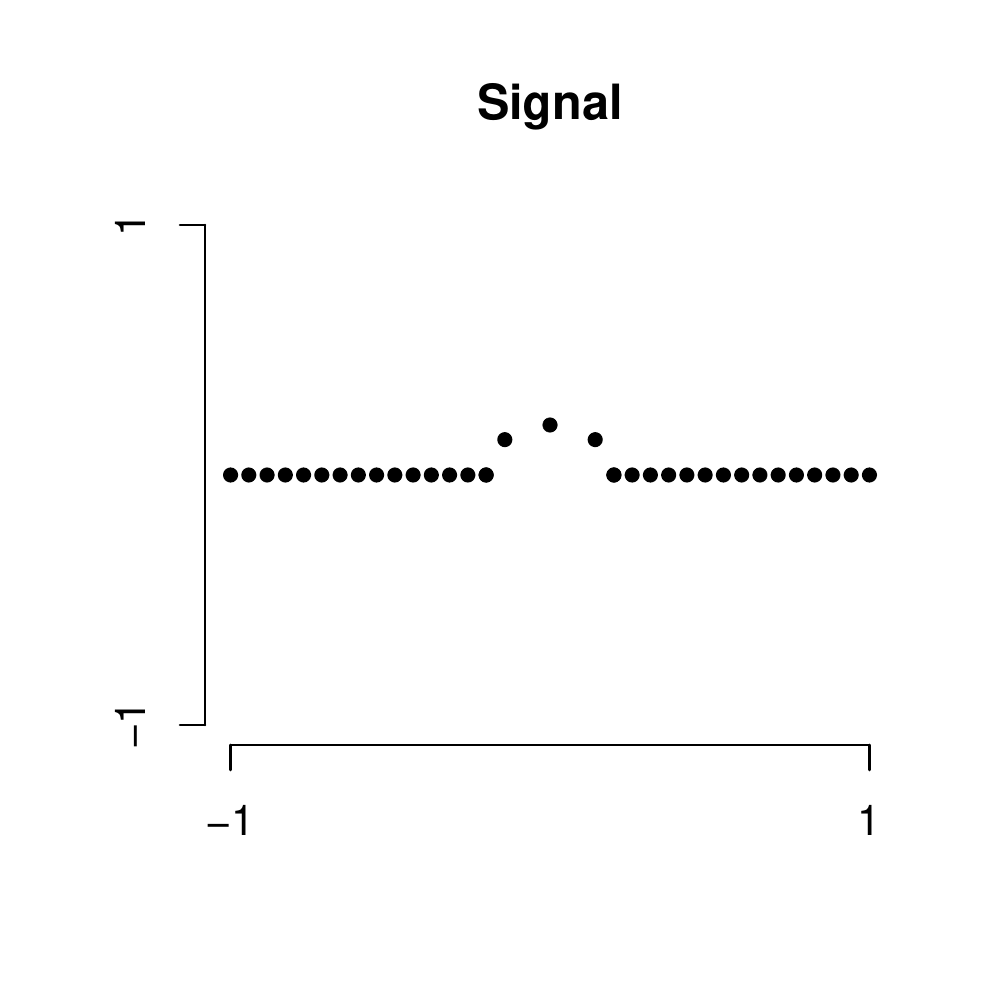}
		\end{subfigure}
		\begin{subfigure}{0.40\linewidth}
			\includegraphics[width=\linewidth]{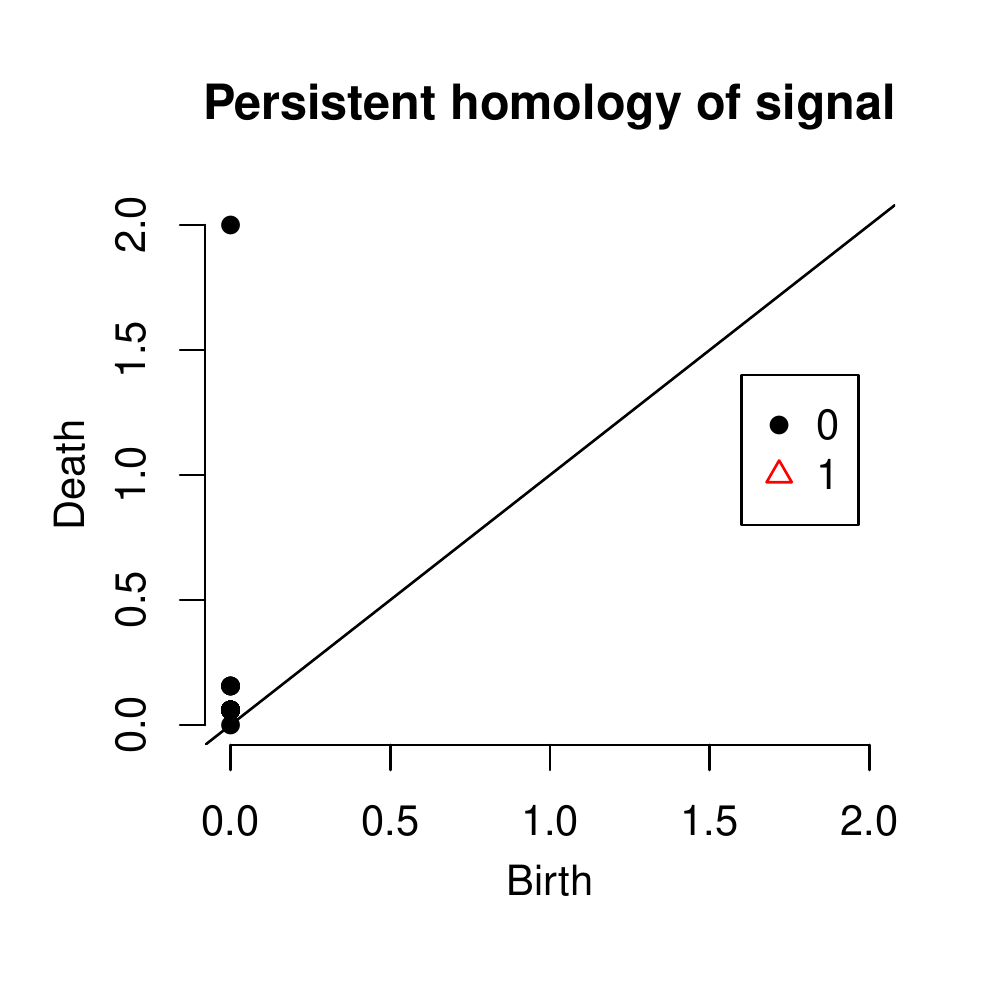}
		\end{subfigure}
		\begin{subfigure}{0.40\linewidth}
			\includegraphics[width=\linewidth]{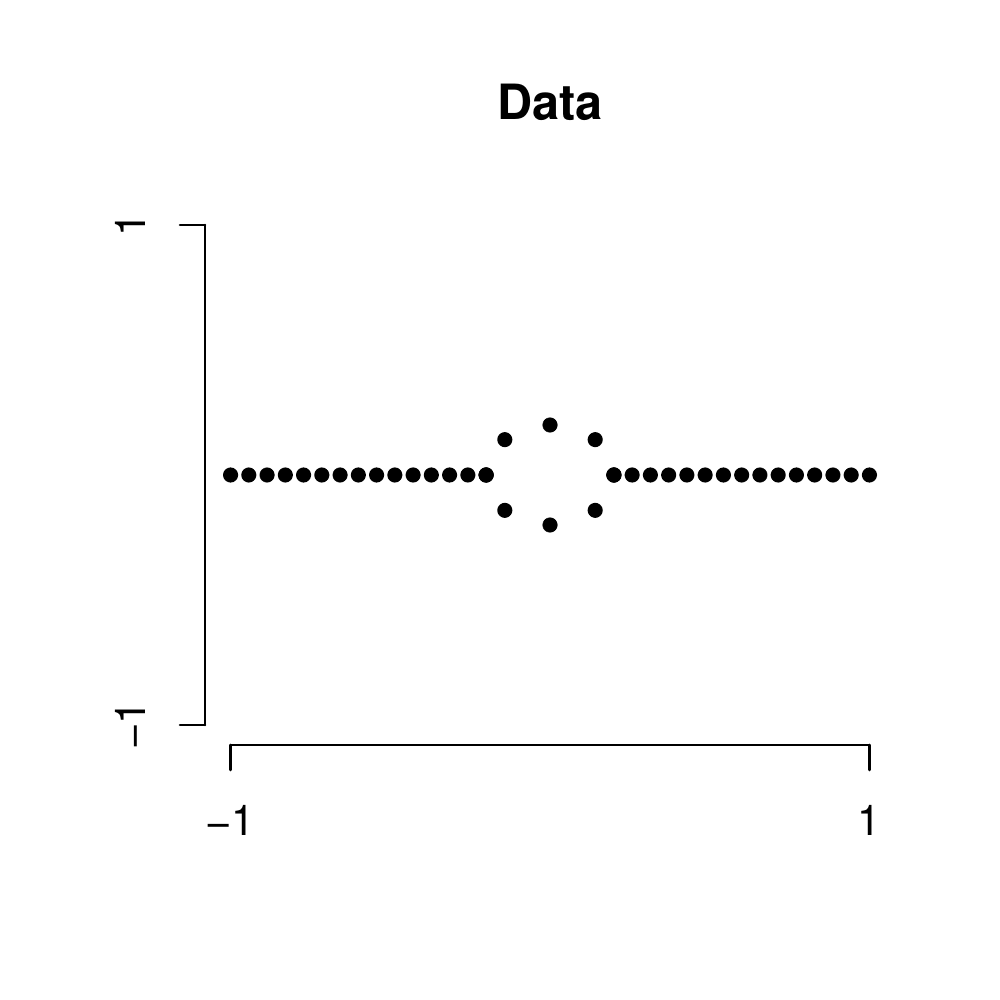}
		\end{subfigure}
		\begin{subfigure}{0.40\linewidth}
			\includegraphics[width=\linewidth]{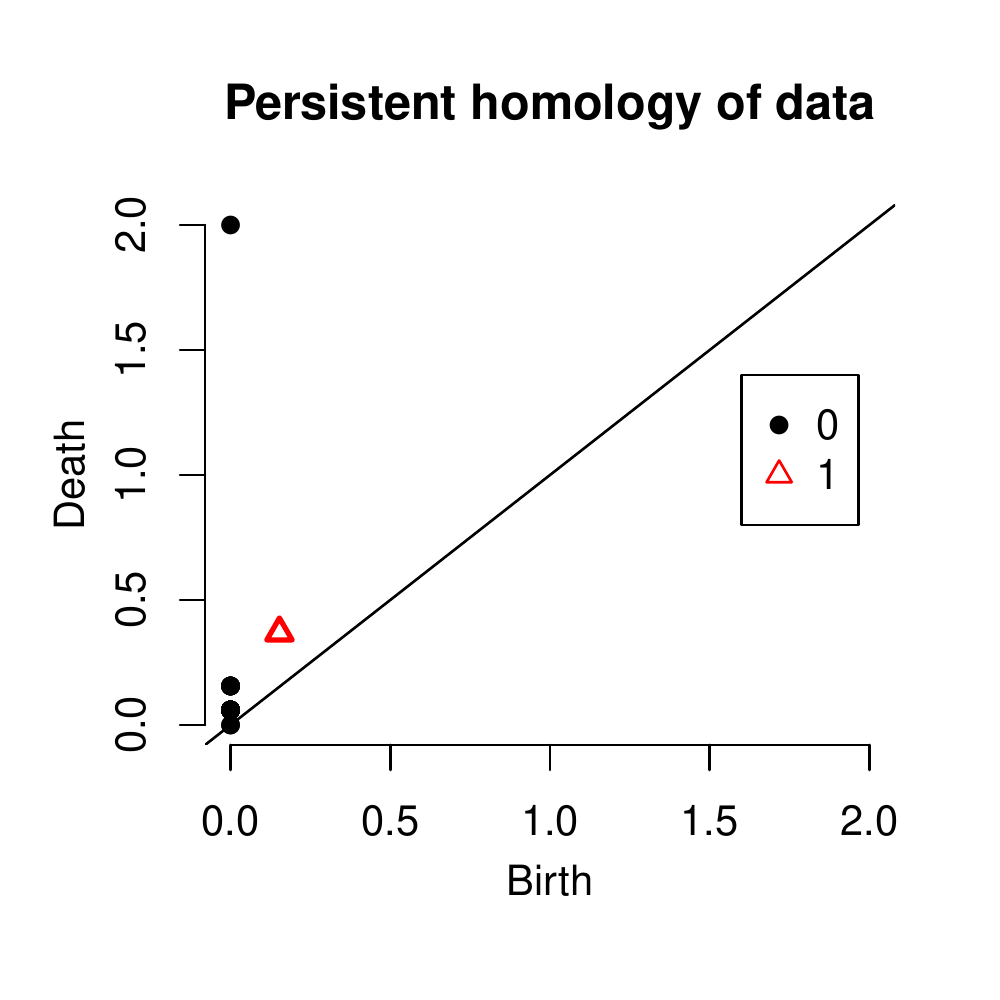}
		\end{subfigure}
		\begin{subfigure}{0.40\linewidth}
			\includegraphics[width=\linewidth]{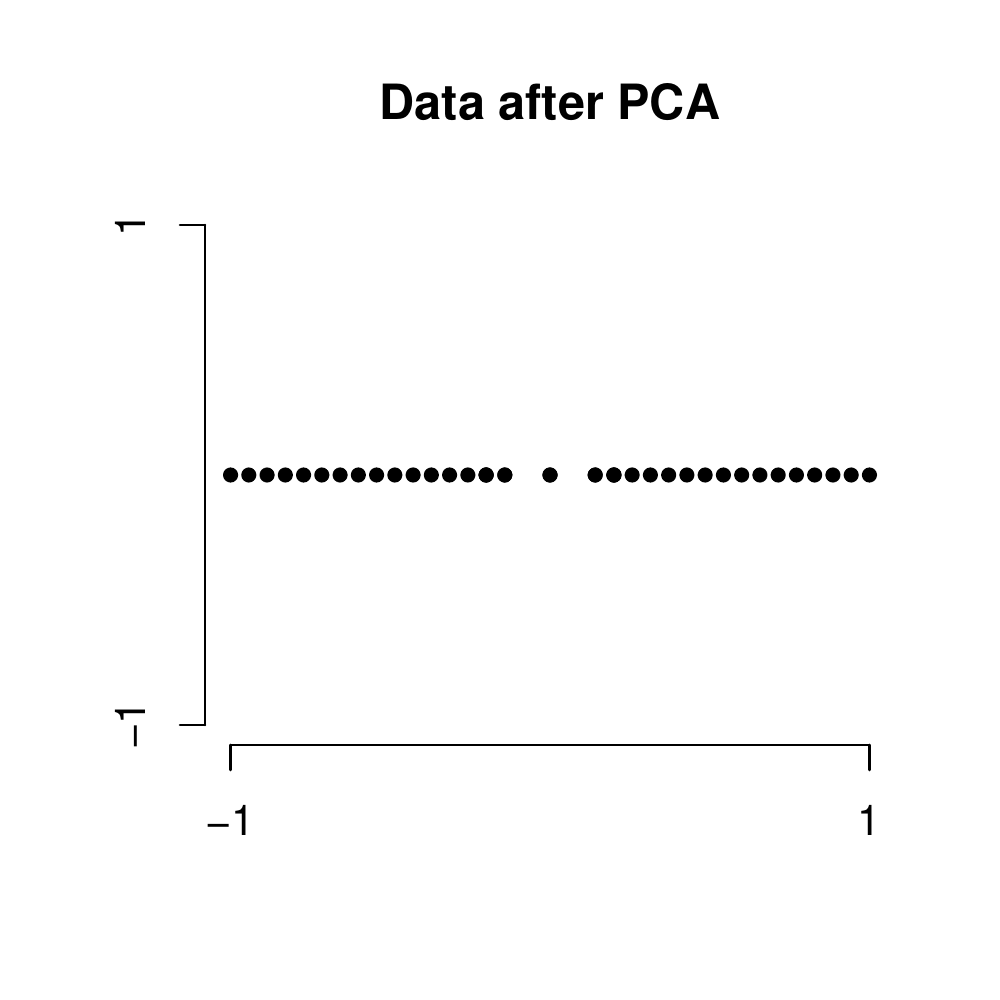}
		\end{subfigure}
		\begin{subfigure}{0.40\linewidth}
			\includegraphics[width=\linewidth]{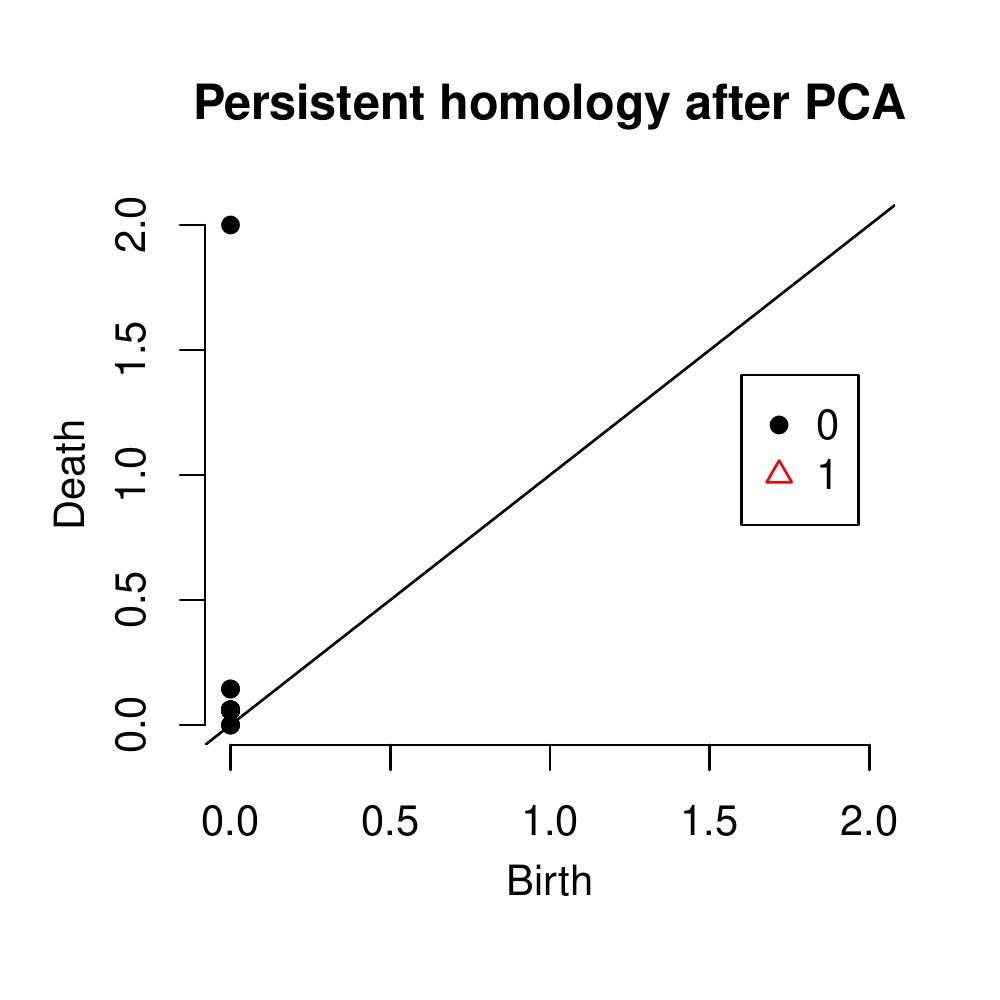}
		\end{subfigure}
	\end{center}
	\vspace{-.25 in}
	\caption{An example in which PCA can reduce the higher dimensional homological noise while preserving the homological features from the signal. We focus on the $0$-dimensional features, marked as black points on the right column. The signal consists of 35 points on a union of two intervals and a half circle (top left). This has only a $0$-dimensional feature in the persistence diagram (top right). Noise points are added so that a circle is formed in the center (mid left). Then a noisy $1$-dimensional feature appears near the diagonal line in the persistence diagram (mid right). After doing PCA to project to $1$-dimensional space (bottom left), all the noisy $1$-dimensional features are gone but the signal $0$-dimensional features are preserved in the persistence diagram (bottom right).} 
	\label{fig:pca_homology_high}
\end{figure}

On the other hand, Figure \ref{fig:pca_homology} shows how the topological noise of lower dimension is reduced. The topological noise is of dimension smaller than $l$ aligned with the orthogonal direction of the linear subspace. Then, the topological noise along the orthogonal direction is reduced, and due to its alignment, the topological noise is eliminated entirely.

\begin{figure} [!h]
	\begin{center}
		\begin{subfigure}{0.40\linewidth}
			\includegraphics[width=\linewidth]{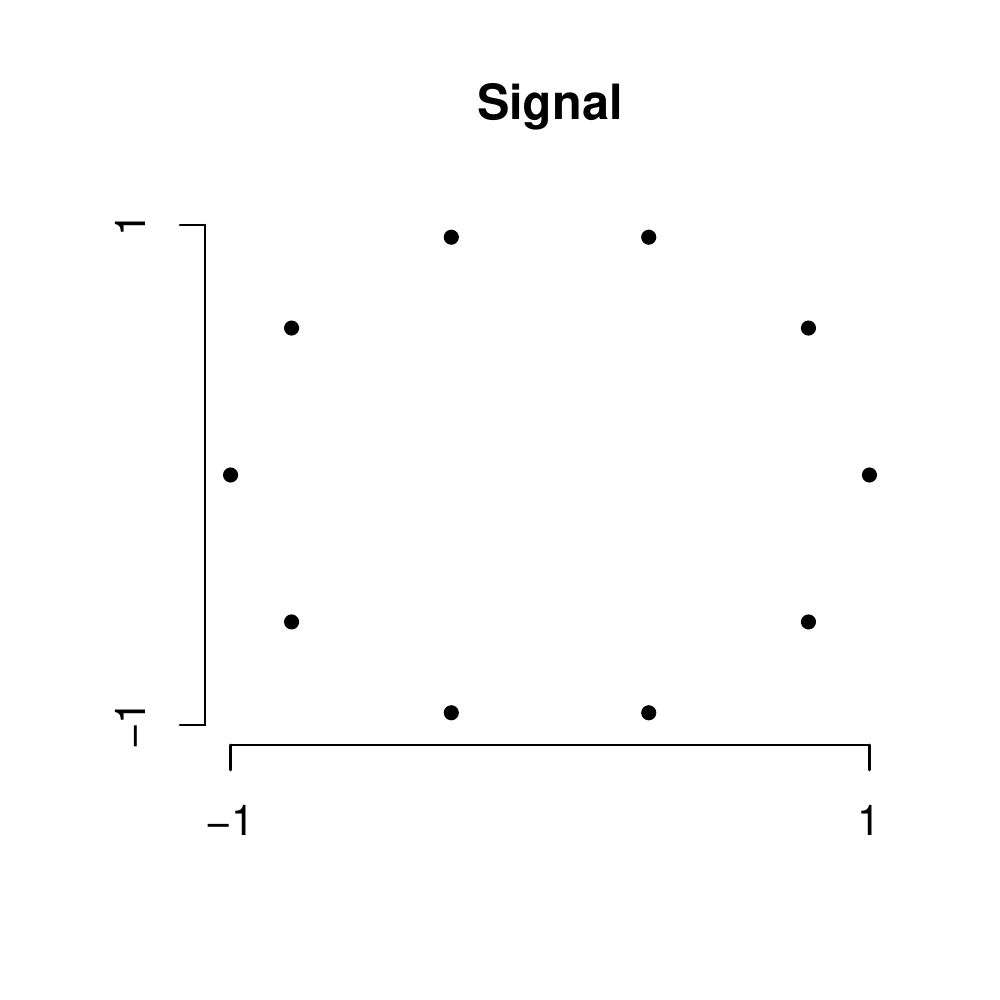}
		\end{subfigure}
		\begin{subfigure}{0.40\linewidth}
			\includegraphics[width=\linewidth]{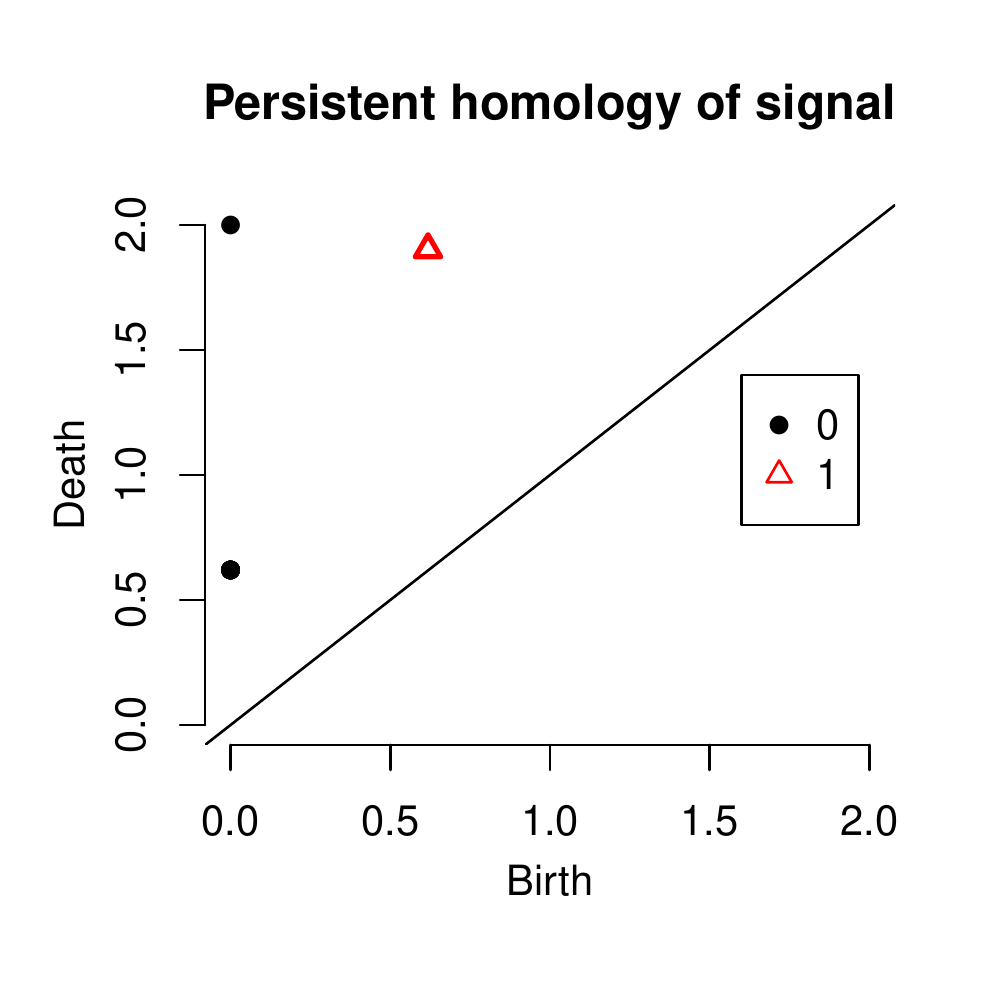}
		\end{subfigure}
		\begin{subfigure}{0.40\linewidth}
			\includegraphics[width=\linewidth]{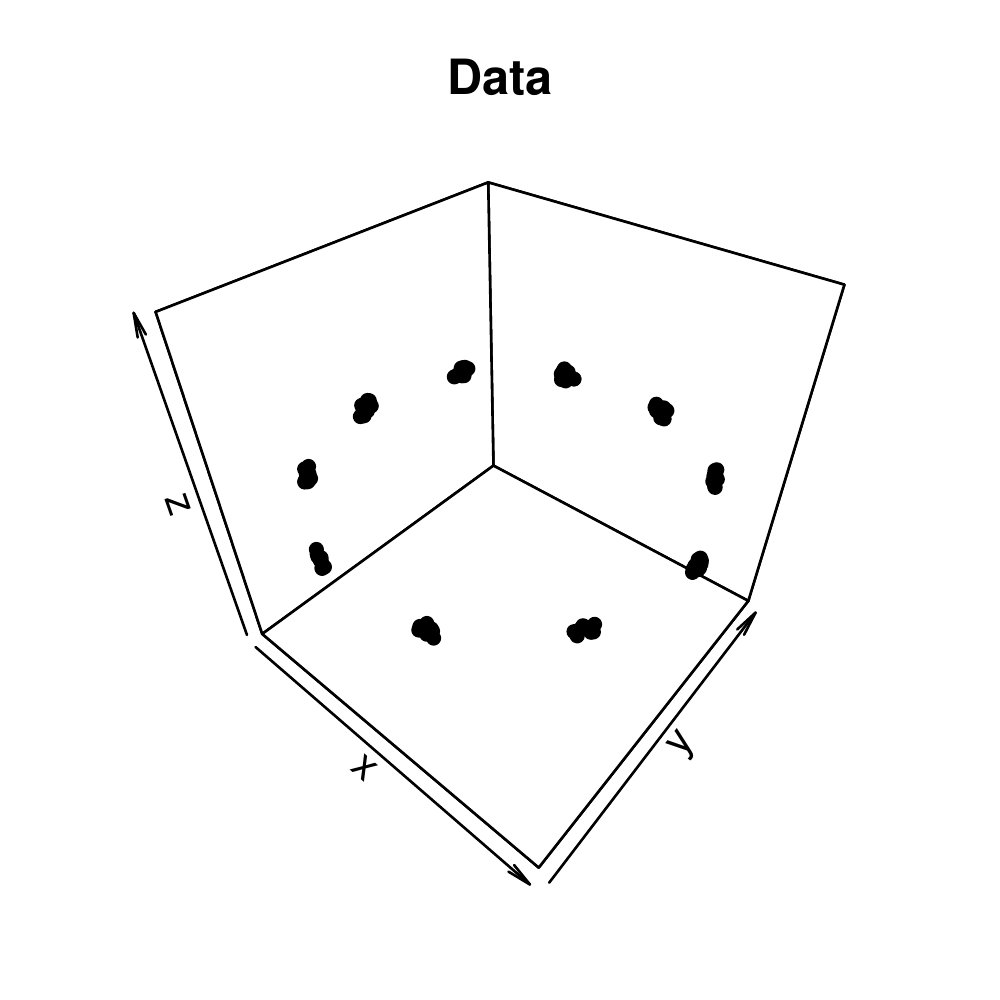}
		\end{subfigure}
		\begin{subfigure}{0.40\linewidth}
			\includegraphics[width=\linewidth]{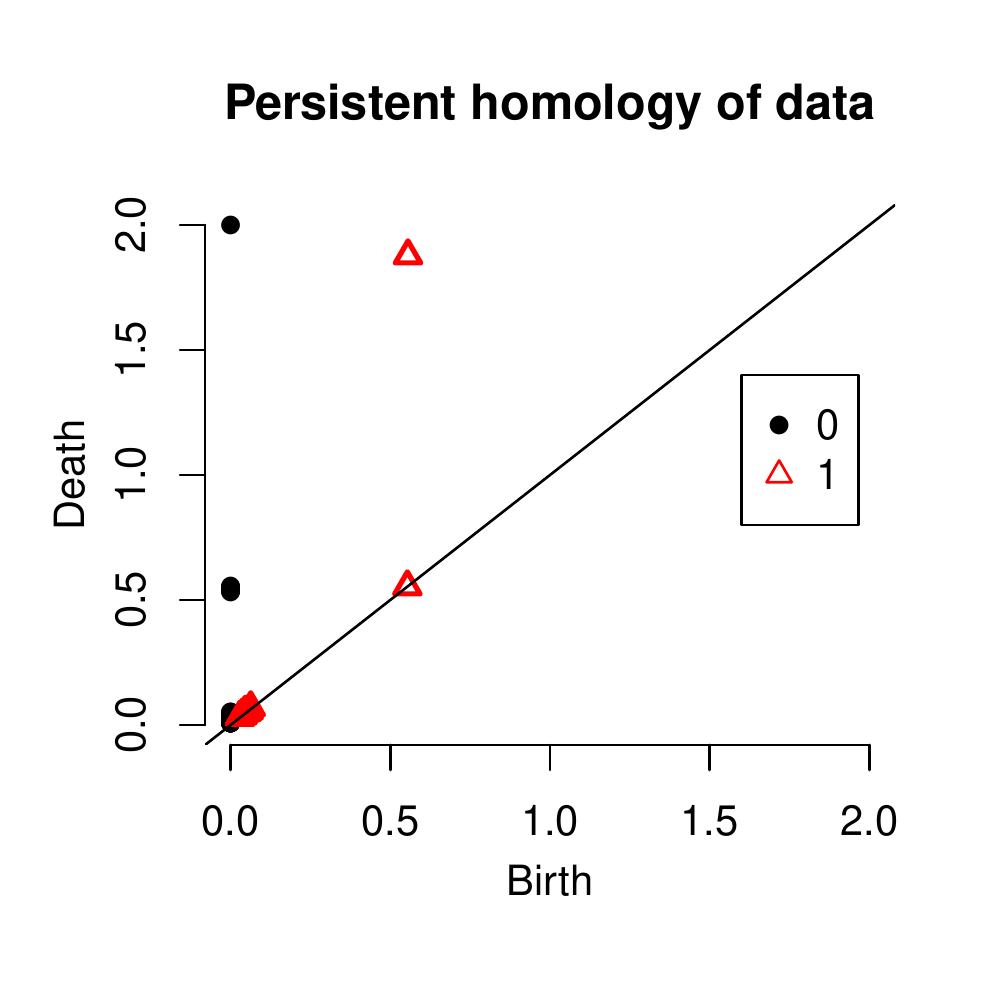}
		\end{subfigure}
		\begin{subfigure}{0.40\linewidth}
			\includegraphics[width=\linewidth]{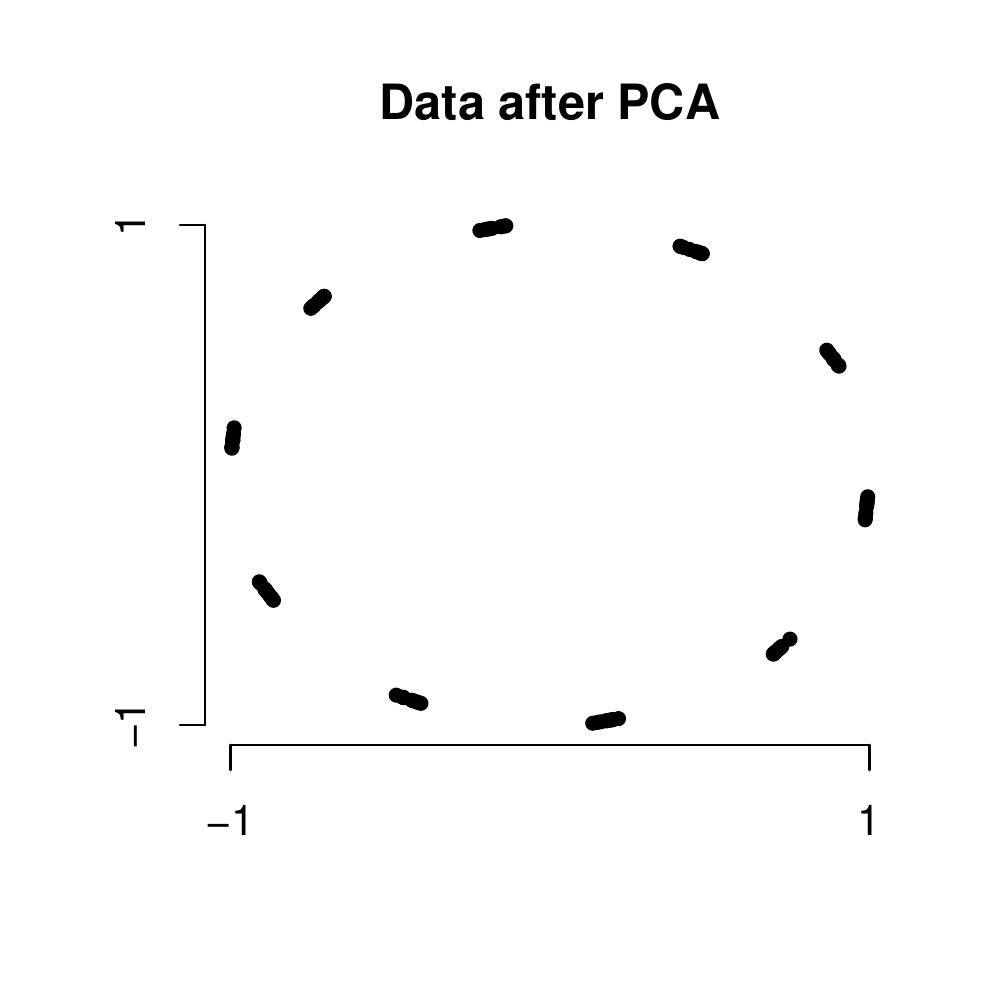}
		\end{subfigure}
		\begin{subfigure}{0.40\linewidth}
			\includegraphics[width=\linewidth]{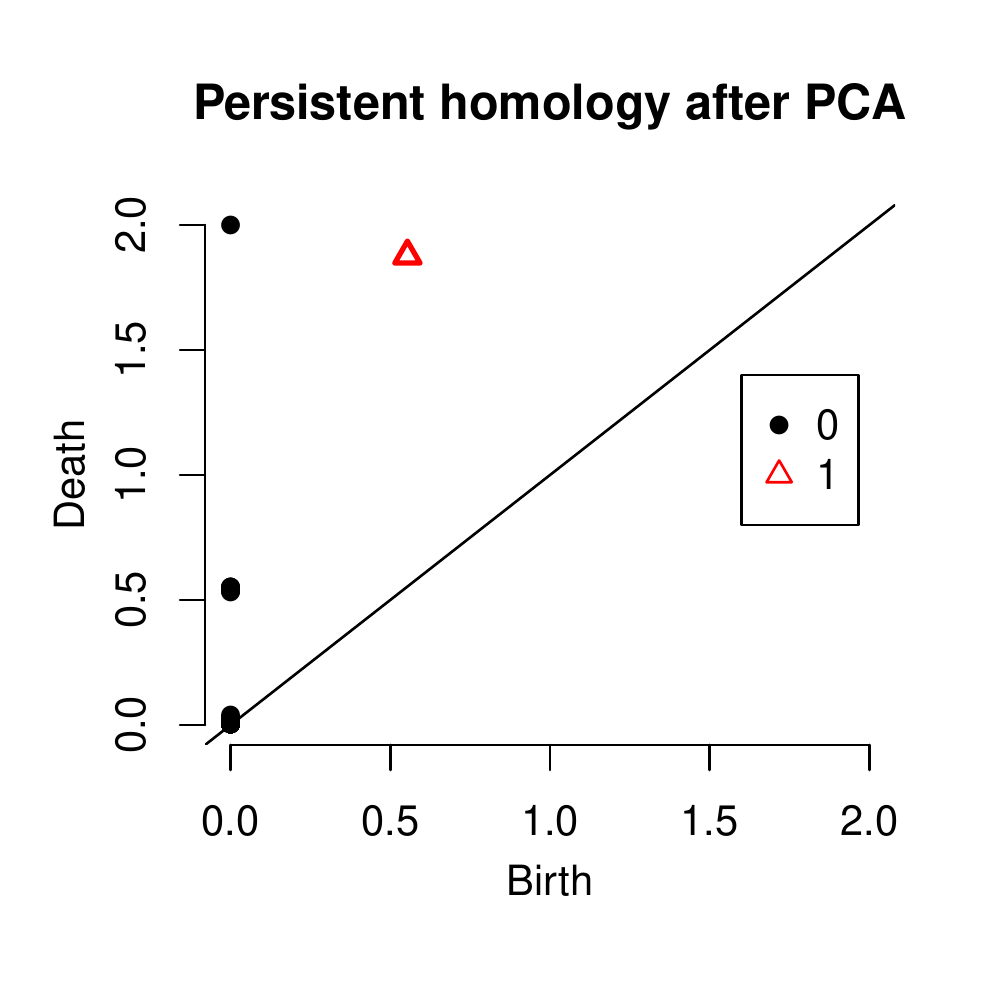}
		\end{subfigure}
	\end{center}
	\vspace{-.25 in}
	\caption{An example in which PCA can reduce the lower dimensional homological noise -- assumed for convenience to be aligned orthogonally to the true signal space -- while preserving the homological features from the signal. We focus on the $1$-dimensional feature, marked as red triangles on the right column. The signal consists of 10 equally spaced points on the unit circle (top left). This has one feature in the persistence diagram (top right). Noise is added to the tangential direction to the circle and $z$-axis to sample $100$ points (mid left). Then several noisy features appear near the diagonal line in the persistence diagram (mid right). After doing PCA to project to $2$-dimensional space (bottom left), all the noisy features are gone but the signal feature is preserved in the persistence diagram (bottom right).} 
	\label{fig:pca_homology}
\end{figure}

\clearpage

\section{Proofs of Section \ref{sec:pca-denoise}}

\begin{proof}[Proof of Proposition \ref{prop:pca_stability_persistent_homology_linear}]
	
	Note that applying PCA is isometric to $X$ being projected to the
	linear subspace generated by the first $l$ eigenvectors of $X^{\top}X$.
	For $k\leq d$, let $\mathbb{V}^{k}$ and $V^{k}$ be the linear subspaces
	generated by the first $k$ eigenvectors of $\mathbb{X}^{\top}\mathbb{X}$
	and $X^{\top}X$, respectively, and let $\Pi_{\mathbb{V}^{k}}$ and
	$\Pi_{V^{k}}$ be the projection operator to $\mathbb{V}^{k}$ and
	$V^{k}$, respectively. With this notation, $X^{l}$ is isometric
	to $\Pi_{V^{l}}(X)$, where $X$ is understood as a point cloud. And
	hence $d_{GH}(X^{l},\mathbb{X})$ can be expanded as 
	\[
	d_{GH}(X^{l},\mathbb{X})=d_{GH}(\Pi_{V^{l}}(X),\mathbb{X}).
	\]
	Note that from \cite[Theorem 2]{YuWS2015}, 
	\begin{align}
	\left\Vert (Id-\Pi_{V^{\tilde{l}}})\Pi_{\mathbb{V}^{\tilde{l}}}\right\Vert _{F} & \leq\frac{2\left\Vert \frac{1}{n}X^{\top}X-\frac{1}{n}\mathbb{X}^{\top}\mathbb{X}\right\Vert _{F}}{\lambda_{\tilde{l}}}.\nonumber \\
	& \leq\frac{2\sum_{i=1}^{n}\left\Vert X_{i}X_{i}^{\top}-\mathbb{X}_{i}\mathbb{X}_{i}^{\top}\right\Vert _{F}}{n\lambda_{\tilde{l}}}.\label{eq:analysis_stability_david_kahan-1-1}
	\end{align}
	Now, let $\epsilon_{i}:=X_{i}-\mathbb{X}_{i}$ for each $i=1,\ldots,n$.
	Then, 
	\begin{align*}
	\left\Vert X_{i}X_{i}^{\top}-\mathbb{X}_{i}\mathbb{X}_{i}^{\top}\right\Vert _{F} & =\left\Vert (\mathbb{X}_{i}+\epsilon_{i})(\mathbb{X}_{i}+\epsilon_{i})^{\top}-\mathbb{X}_{i}\mathbb{X}_{i}^{\top}\right\Vert _{F}\\
	& =\left\Vert \mathbb{X}_{i}\epsilon_{i}^{\top}\right\Vert _{F}+\left\Vert \epsilon_{i}\mathbb{X}_{i}^{\top}\right\Vert _{F}+\left\Vert \epsilon_{i}\epsilon_{i}^{\top}\right\Vert _{F}\\
	& \leq2\left\Vert \mathbb{X}_{i}\right\Vert _{2}\left\Vert \epsilon_{i}\right\Vert _{2}+\left\Vert \epsilon_{i}\right\Vert _{2}^{2}.
	\end{align*}
	Hence applying to \eqref{eq:analysis_stability_david_kahan-1-1} gives
	the bound as 
	\begin{align*}
	\left\Vert (Id-\Pi_{V^{\tilde{l}}})\Pi_{\mathbb{V}^{\tilde{l}}}\right\Vert _{F} & \leq\frac{2\sup_{i}\left\Vert \epsilon_{i}\right\Vert _{2}\left(\sup_{i}\left\Vert \epsilon_{i}\right\Vert _{2}+2\sup_{i}\left\Vert \mathbb{X}_{i}\right\Vert _{2}\right)}{\lambda_{\tilde{l}}}.
	\end{align*}
	Now, note that the distance between $\Pi_{V^{l}}(X_{i})$ (after PCA)
	and $\mathbb{X}_{i}^{N}$ can be bounded as 
	\begin{align}
	\left\Vert \Pi_{V^{l}}(X_{i})-\mathbb{X}_{i}\right\Vert _{2} & =\left\Vert \Pi_{V^{l}}(\epsilon_{i})+\Pi_{V^{l}}(\mathbb{X}_{i})-\mathbb{X}_{i}\right\Vert _{2}\nonumber \\
	& \leq\left\Vert (Id-\Pi_{V^{l}})(\mathbb{X}_{i})\right\Vert _{2}+\left\Vert \Pi_{\mathbb{V}^{l}}(\epsilon_{i})\right\Vert _{2}.\label{eq:analysis_stability_pca_error_decomposition-1-1}
	\end{align}
	Then $V^{\tilde{l}}\subset V^{l}$ implies that $\left\Vert (Id-\Pi_{V^{l}})v\right\Vert _{2}\leq\left\Vert (Id-\Pi_{V^{\tilde{l}}})v\right\Vert _{2}$,
	and $\mathbb{X}_{i}\in\mathbb{V}^{\tilde{l}}$ implies
	$\Pi_{\mathbb{V}^{\tilde{l}}}(\mathbb{X}_{i})=\mathbb{X}_{i}$, and
	hence the first term of \eqref{eq:analysis_stability_pca_error_decomposition-1-1}
	is bounded as 
	\begin{align*}
	\left\Vert (Id-\Pi_{V^{l}})(\mathbb{X}_{i})\right\Vert _{2} & \leq\left\Vert (Id-\Pi_{V^{\tilde{l}}})(\mathbb{X}_{i})\right\Vert _{2}=\left\Vert (Id-\Pi_{V^{\tilde{l}}})\Pi_{\mathbb{V}^{\tilde{l}}}(\mathbb{X}_{i})\right\Vert _{2}\\
	& \leq\left\Vert (Id-\Pi_{V^{\tilde{l}}})\Pi_{\mathbb{V}^{\tilde{l}}}(\mathbb{X}_{i})\right\Vert _{2}\\
	& \leq\left\Vert (Id-\Pi_{V^{\tilde{l}}})\Pi_{\mathbb{V}^{\tilde{l}}}\right\Vert _{2}\left\Vert \mathbb{X}_{i}\right\Vert _{2}.
	\end{align*}
	Then further applying \eqref{eq:analysis_stability_david_kahan-1-1}
	gives 
	\begin{equation}
	\left\Vert (Id-\Pi_{V^{l}})(\mathbb{X}_{i})\right\Vert _{2}\leq\frac{2\sup_{i}\left\Vert \epsilon_{i}\right\Vert _{2}\sup_{i}\left\Vert \mathbb{X}_{i}\right\Vert _{2}\left(\sup_{i}\left\Vert \epsilon_{i}\right\Vert _{2}+2\sup_{i}\left\Vert \mathbb{X}_{i}\right\Vert _{2}\right)}{\lambda_{\tilde{l}}}.\label{eq:analysis_stability_pca-1-1}
	\end{equation}
	Now, the second term of \eqref{eq:analysis_stability_pca_error_decomposition-1-1}
	is bounded as 
	\begin{equation}
	\left\Vert \Pi_{\mathbb{V}^{l}}(\epsilon_{i})\right\Vert _{2}\leq\left\Vert \epsilon_{i}\right\Vert _{2}\leq\sup_{i}\left\Vert \epsilon_{i}\right\Vert _{2}.\label{eq:analysis_stability_error-1-1}
	\end{equation}
	Then by applying \eqref{eq:analysis_stability_pca-1-1} and \eqref{eq:analysis_stability_error-1-1}
	to \eqref{eq:analysis_stability_pca_error_decomposition-1-1}, the
	Hausdorff distance between $\Pi_{V^{l}}(X)$ and $\mathbb{X}$
	can be bounded as 
	\begin{align}
	d_{H}(\Pi_{V^{l}}(X),\mathbb{X}) & \leq\sup_{1\leq i\leq n}\left\Vert \Pi_{V^{l}}(X_{i})-\mathbb{X}_{i}\right\Vert _{2}\nonumber \\
	& \leq\sup_{i}\left\Vert \epsilon_{i}\right\Vert _{2}\left(1+\frac{2\sup_{i}\left\Vert \mathbb{X}_{i}\right\Vert _{2}\left(\sup_{i}\left\Vert \epsilon_{i}\right\Vert _{2}+2\sup_{i}\left\Vert \mathbb{X}_{i}\right\Vert _{2}\right)}{\lambda_{\tilde{l}}}\right).\label{eq:analysis_stability_hausdorff_pca-1-1}
	\end{align}
	And correspondingly,
	\begin{align*}
	d_{B}(D_{\mathbb{X}},D_{X^{l}}) & \leq d_{GH}(X^{l},\mathbb{X})=d_{GH}(\Pi_{V^{l}}(X),\mathbb{X})\\
	& \leq\sup_{i}\left\Vert \mathbb{X}_{i}-X_{i}\right\Vert _{2}\left(1+\frac{2\sup_{i}\left\Vert \mathbb{X}_{i}\right\Vert _{2}\left(\sup_{i}\left\Vert \mathbb{X}_{i}-X_{i}\right\Vert _{2}+2\sup_{i}\left\Vert \mathbb{X}_{i}\right\Vert _{2}\right)}{\lambda_{\tilde{l}}}\right).
	\end{align*}
	
\end{proof}

\begin{proof}[Proof of Corollary \ref{cor:pca_stability_persistent_homology_general}]
	
	Note that $d_{B}(D_{\mathbb{X}},D_{X^{l}})$ can be bounded as 
	\[
	d_{B}(D_{\mathbb{X}},D_{X^{l}})\leq d_{B}(D_{\mathbb{X}},D_{\mathbb{X}_{\mathbb{V}}})+d_{B}(D_{\mathbb{X}_{\mathbb{V}}},D_{X^{l}}).
	\]
	Then the first term is bounded as 
	\[
	d_{B}(D_{\mathbb{X}},D_{\mathbb{X}_{\mathbb{V}}})\leq d_{H}(\mathbb{X},\mathbb{X}_{\mathbb{V}})=d_{H}(\mathbb{X},\mathbb{V}).
	\]
	And by using $\left\Vert (\mathbb{X}_{\mathbb{V}})_{i}\right\Vert _{2}\leq\left\Vert \mathbb{X}_{i}\right\Vert _{2}$
	and $\sup_{i}\left\Vert (\mathbb{X}_{\mathbb{V}})_{i}-X_{i}\right\Vert _{2}\leq d_{H}(\mathbb{X},\mathbb{V})+\sup_{i}\left\Vert \mathbb{X}_{i}-X_{i}\right\Vert _{2}$,
	the second term is bounded as
	\begin{align*}
	& d_{B}(D_{\mathbb{X}_{\mathbb{V}}},D_{X^{l}})\\
	& \leq\sup_{i}\left\Vert (\mathbb{X}_{\mathbb{V}})_{i}-X_{i}\right\Vert _{2}\left(1+\frac{2\sup_{i}\left\Vert (\mathbb{X}_{\mathbb{V}})_{i}\right\Vert _{2}\left(\sup_{i}\left\Vert (\mathbb{X}_{\mathbb{V}})_{i}-X_{i}\right\Vert _{2}+2\sup_{i}\left\Vert (\mathbb{X}_{\mathbb{V}})_{i}\right\Vert _{2}\right)}{\lambda_{\tilde{l}}}\right)\\
	& \leq\left(d_{H}(\mathbb{X},\mathbb{V})+\sup_{i}\left\Vert \mathbb{X}_{i}-X_{i}\right\Vert _{2}\right)\\
	& \ \times\left(1+\frac{2\sup_{i}\left\Vert \mathbb{X}_{i}\right\Vert _{2}\left(d_{H}(\mathbb{X},\mathbb{V})+\sup_{i}\left\Vert \mathbb{X}_{i}-X_{i}\right\Vert _{2}+2\sup_{i}\left\Vert \mathbb{X}_{i}\right\Vert _{2}\right)}{\lambda_{\tilde{l}}}\right).
	\end{align*}
	Hence combining these two terms gives 
	\begin{align*}
	d_{B}(D_{\mathbb{X}},D_{X^{l}})
	& \leq d_{B}(D_{\mathbb{X}},D_{\mathbb{X}_{\mathbb{V}}})+d_{B}(D_{\mathbb{X}_{\mathbb{V}}},D_{X^{l}})\\
	& \leq d_{H}(\mathbb{X},\mathbb{V})+\left(d_{H}(\mathbb{X},\mathbb{V})+\sup_{i}\left\Vert \mathbb{X}_{i}-X_{i}\right\Vert _{2}\right)\\
	& \times\left(1+\frac{2\sup_{i}\left\Vert \mathbb{X}_{i}\right\Vert _{2}\left(d_{H}(\mathbb{X},\mathbb{V})+\sup_{i}\left\Vert \mathbb{X}_{i}-X_{i}\right\Vert _{2}+2\sup_{i}\left\Vert \mathbb{X}_{i}\right\Vert _{2}\right)}{\lambda_{\tilde{l}}}\right).
	\end{align*}
	
\end{proof}

\section{Proofs of Section \ref{sec:stability-thm}}
\label{app:proof-stability}

\begin{proof}[Proof of Proposition \ref{prop:analysis_stability_persistent_homology}] 
	
	When $\left\Vert \epsilon\right\Vert _{\infty}=\infty$, then there
	is nothing to prove, so we can assume that $\left\Vert \epsilon\right\Vert _{\infty}<\infty$.
	Then since $f$ is a Lipschitz function on a bounded domain $[0,T]$,
	$\mathbb{X}$ is bounded as well. 
	
	
	Let $\mathbb{X}^{N}:=\{SW_{m,\tau}f(t):\,t=(0+(m-1)\tau),(1+(m-1)\tau),\ldots,T\}$
	and write as a matrix where each element corresponds to each row.
	Denote $i$th rows of $X$ and $\mathbb{X}^{N}$ as $X_{i}$ and $\mathbb{X}_{i}^{N}$.
	Let $\tilde{N}:=N-(m-1)\tau$, and understand $X$ and $\mathbb{X}^{N}$
	as a point cloud of $\{X_{i}\}_{i\leq\tilde{N}}$ and $\{\mathbb{X}_{i}^{N}\}_{i\leq\tilde{N}}$,
	respectively, depending on the context.
	
	We first consider bounding the Gromov-Hausdorff distance $d_{GH}(X^{l},\mathbb{X})$
	between the data after PCA $X^{l}$ and the Takens embedding of the
	signal $\mathbb{X}$. Note first that $d_{GH}(X^{l},\mathbb{X})$
	is bounded as 
	\begin{equation}
	d_{GH}(X^{l},\mathbb{X}) \leq d_{H}(\Pi_{V^{l}}(X),\mathbb{X}^{N})+d_{H}(\mathbb{X}^{N},\mathbb{X}).\label{eq:analysis_stability_gromov_hausdorff}
	\end{equation}
	Then from applying $\left\Vert \mathbb{X}_{i}^{N}-X_{i}\right\Vert _{2}\leq\sqrt{m}\left\Vert \epsilon\right\Vert _{\infty}$and
	$\left\Vert \mathbb{X}_{i}^{N}\right\Vert _{2}\leq\sqrt{m}L_{f}T$
	to Proposition \ref{prop:pca_stability_persistent_homology_linear},
	the first term of \eqref{eq:analysis_stability_gromov_hausdorff}
	is bounded as 
	\begin{align}
	d_{H}(\Pi_{V^{l}}(X),\mathbb{X}^{N}) & \leq\sqrt{m}\left\Vert \epsilon\right\Vert _{\infty}\left(1+\frac{2\sqrt{m}L_{f}T\left(2\sqrt{m}L_{f}T+\sqrt{m}\left\Vert \epsilon\right\Vert _{\infty}\right)}{\lambda_{\tilde{l}}}\right)\nonumber \\
	& \leq\sqrt{m}\left\Vert \epsilon\right\Vert _{\infty}+\frac{2m^{\frac{3}{2}}L_{f}T\left\Vert \epsilon\right\Vert _{\infty}(2L_{f}T+\left\Vert \epsilon\right\Vert _{\infty})}{\lambda_{\tilde{l}}}.\label{eq:analysis_stability_hausdorff_pca}
	\end{align}
	Now, consider the second term of \eqref{eq:analysis_stability_gromov_hausdorff},
	which is coming from the time sampling. Since we have assumed that
	$f$ is $L_{f}$-Lipschitz, i.e. $\left|f(t_{1})-f(t_{2})\right|\leq L_{f}|t_{1}-t_{2}|$,
	$SW_{m,\tau}f$ is also Lipschitz with constant $\sqrt{m}L_{f}$ as
	\begin{align*}
	\Vert SW_{m,\tau}f(t_{1})-SW_{m,\tau}(f)(t_{2})\Vert_{2} & =\sqrt{\sum_{k=0}^{m-1}\big|f(t_{1}+k(m-1)\tau)-f(t_{2}+k(m-1)\tau)\big|^{2}}\\
	& \leq\sqrt{m}L_{f}|t_{1}-t_{2}|.
	\end{align*}
	Since $\mathbb{X}_{N}$ is sampled from $\mathbb{X}$ by sampling
	grid size of $\frac{T}{N}$, $d_{H}(\mathbb{X}^{N},\mathbb{X})$ is
	bounded as 
	\begin{equation}
	d_{H}(\mathbb{X}^{N},\mathbb{X})\leq\frac{\sqrt{m}L_{f}T}{N}.\label{eq:analysis_stability_hausdorff_time}
	\end{equation}
	Hence by applying \eqref{eq:analysis_stability_hausdorff_pca} and
	\eqref{eq:analysis_stability_hausdorff_time} to \eqref{eq:analysis_stability_gromov_hausdorff},
	the Gromov-Hausdorff distance $d_{GH}(X^{l},\mathbb{X})$ can be bounded
	as 
	\begin{align}
	d_{GH}(X^{l},\mathbb{X}) & \leq d_{H}(\Pi_{V^{l}}(X),\mathbb{X}^{N})+d_{H}(\mathbb{X}^{N},\mathbb{X}).\nonumber \\
	& \leq\sqrt{m}\left\Vert \epsilon\right\Vert _{\infty}+\frac{2m^{\frac{3}{2}}L_{f}T\left\Vert \epsilon\right\Vert _{\infty}(2L_{f}T+\left\Vert \epsilon\right\Vert _{\infty})}{\lambda_{\tilde{l}}}+\frac{\sqrt{m}L_{f}T}{N}.\label{eq:analysis_stability_gromov_hausdorff_bound}
	\end{align}
	Now, from the assumption $\left\Vert \epsilon\right\Vert _{\infty}<\infty$,
	both $X^{l}$ and $\mathbb{X}$ are bounded subsets of Euclidean space, hence applying \eqref{eq:analysis_stability_gromov_hausdorff_bound}
	to Theorem \ref{thm:stability_rips}
	bounds the bottleneck distance $d_{B}(D_{X},D_{f})$ as 
	\begin{align*}
	d_{B}(D_{X},D_{f}) & \leq d_{GH}(X^{l},\mathbb{X})\\
	& \leq\sqrt{m}\left\Vert \epsilon\right\Vert _{\infty}+\frac{2m^{\frac{3}{2}}L_{f}T\left\Vert \epsilon\right\Vert _{\infty}(2L_{f}T+\left\Vert \epsilon\right\Vert _{\infty})}{\lambda_{\tilde{l}}}+\frac{\sqrt{m}L_{f}T}{N}.
	\end{align*}
	
\end{proof}

\begin{proof}[Proof of Theorem \ref{thm:analysis_stability_landscape_silhouette}]
	
	From Theorem A.1 in \cite{bubenik2015statistical} and Proposition
	\ref{prop:analysis_stability_persistent_homology}, the $l_\infty$ distance between the landscape functions is upper bounded as 
	\begin{align*}
	\left\Vert\lambda_{k,X}-\lambda_{k,f}\right\Vert_{\infty} & \leq d_{B}(D_{X},D_{f})\\
	& \leq\sqrt{m}\left\Vert \epsilon\right\Vert _{\infty}+\frac{2m^{\frac{3}{2}}L_{f}T\left\Vert \epsilon\right\Vert _{\infty}(2L_{f}T+\left\Vert \epsilon\right\Vert _{\infty})}{\lambda_{\tilde{l}}}+\frac{\sqrt{m}L_{f}T}{N}.
	\end{align*}
	And since the silhouette functions are weighted sums of $\Lambda_{p}$, the $l_\infty$ distance between them is also
	correspondingly bounded as 
	\begin{align*}
	\left\Vert\phi_{X}^{(p)}-\phi_{f}^{(p)}\right\Vert_{\infty} & \leq d_{B}(D_{X},D_{f})\\
	& \leq\sqrt{m}\left\Vert \epsilon\right\Vert _{\infty}+\frac{2m^{\frac{3}{2}}L_{f}T\left\Vert \epsilon\right\Vert _{\infty}(2L_{f}T+\left\Vert \epsilon\right\Vert _{\infty})}{\lambda_{\tilde{l}}}+\frac{\sqrt{m}L_{f}T}{N}.
	\end{align*}
	
\end{proof}

\end{document}